\theoremstyle{definition}
\newlength\min@xx
\newtheorem{theorem}{Theorem}
\newtheorem{lemma}{Lemma}
\newtheorem{corollary}{Corollary}
\newtheorem{definition}{Definition}
\newtheorem*{remark}{Remark}
\newtheorem{example}{Example} 
\def\ket#1{| #1 \rangle}
\def\bra#1{\langle #1 |}
\def\kb#1#2{|#1\rangle\!\langle #2 |}
\title{Stabilizer Formalism for Operator Algebra Quantum Error Correction}
\author{Guillaume Dauphinais}
\affiliation{Xanadu, Toronto, ON M5G 2C8, Canada}
\author{David W. Kribs}
\affiliation{Xanadu, Toronto, ON M5G 2C8, Canada}
\affiliation{Department of Mathematics \& Statistics, University of Guelph, Guelph, ON N1G 2W1, Canada}
\email{dkribs@uoguelph.ca}
\author{Michael Vasmer}
\affiliation{Xanadu, Toronto, ON M5G 2C8, Canada}
\affiliation{Perimeter Institute for Theoretical Physics, Waterloo, ON N2L 2Y5, Canada}
\affiliation{Institute for Quantum Computing, University of Waterloo, Waterloo, ON N2L 3G1, Canada}
\begin{document}

\maketitle

\begin{abstract}
We introduce a stabilizer formalism for the general quantum error correction framework called operator algebra quantum error correction (OAQEC), which generalizes Gottesman's formulation for traditional quantum error correcting codes (QEC) and Poulin's for operator quantum error correction and subsystem codes (OQEC). The construction generates hybrid classical-quantum stabilizer codes and we formulate a theorem that fully characterizes the Pauli errors that are correctable for a given code, generalizing the fundamental theorems for the QEC and OQEC stabilizer formalisms. We discover hybrid versions of the Bacon-Shor subsystem codes motivated by the formalism, and we apply the theorem to derive a result that gives the distance of such codes.  
We show how some recent hybrid subspace code constructions are captured by the formalism, and we also indicate how it extends to qudits. 
\end{abstract}

\section{Introduction}

Quantum error correction (QEC) is a central topic in quantum information science. Its origins as an independent field of study go back almost three decades \cite{PhysRevA.52.R2493,knill1997theory,PhysRevA.54.1098,ShorFaultTolerant}, and it now touches on almost every aspect of quantum information, ranging from theoretical to experimental investigations and in recent years as a key facet in the development of new quantum technologies \cite{PhysRevLett.116.230502,PhysRevA.94.012311,Bourassa2021blueprintscalable,Wang2022,Zhuang_2020,holzgrafe2019,panteleev2022goodcodes}. More recently, developments in QEC included the introduction of a unified approach, called `operator quantum error correction' (OQEC)  \cite{Kribs2005Unified,Kribs2006oqec} that brought together traditional QEC with passive notions such as decoherence-free subspaces and noiseless subsystems, and led to the advent of subsystem codes and advances in fault-tolerant quantum computing \cite{Poulin2005Stabilizer,Bacon2006Operator,aly2008subsystem,Bombin2015Gauge,Bombin2015Single,PhysRevX.11.031039,kubica2022SingleshotQuantumError,Hastings2021dynamically}. Subsequently, a further generalization was discovered, called `operator algebra quantum error correction' (OAQEC) \cite{Beny2007Generalization,Beny2007Quantum}, which additionally provided an approach for hybrid classical-quantum codes used for the simultaneous encoding of classical and quantum information, and for infinite-dimensional error correction \cite{kuperberg2003capacity,devetak2005capacity,hsieh2010entanglement,hsieh2010trading,beny2009quantum}. 
The following decade saw limited development of OAQEC theory, perhaps due to a paucity of initial applications.

The last few years have witnessed significant renewed interest in OAQEC, from at least three different but related directions. There have been advances in hybrid classical-quantum information coding theory and error correction \cite{grassl2017codes,li2020error,cao2021higher,nemec2021infinite} that fit into the OAQEC framework. Several small quantum error correcting codes and operations necessary as fundamental components of a scalable fault-tolerant quantum computer have been implemented experimentally \cite{https://doi.org/10.48550/arxiv.2009.11482,krinner2021,https://doi.org/10.48550/arxiv.2207.06431,Postler2022}. And in black hole theory, recent work~\cite{Verlinde2013,Almheiri2015,Harlow2017,Hayden2019,almheiri2018holographic,Penington2020,akers2019,kamal2019ryu,Akers2022,akers2022black} has reinterpreted the AdS/CFT correspondence using the language of quantum error correction. In particular, it was argued in~\cite{Harlow2017} that the full machinery of OAQEC is necessary to capture the relevant properties of AdS/CFT.

The stabilizer formalism \cite{gottesman1996class,gottesman1997stabilizer,PhysRevLett.78.405} introduced by Gottesman is a bedrock of QEC, providing a toolbox for the construction and characterization of correctable codes for Pauli error models. This formalism was generalized by Poulin \cite{Poulin2005Stabilizer} to the OQEC setting, giving a way to construct stabilizer subsystem codes and also a characterization of correctable subsystem codes for Pauli errors. The OQEC formalism further gave an appropriate framework in which to view the well-known Bacon-Shor subsystem codes \cite{Bacon2006Operator}, which have proved to be important in fault tolerant quantum computing. 

In this paper, we introduce a stabilizer formalism for OAQEC, which generalizes Gottesman's formulation for traditional QEC codes and Poulin's for OQEC subsystem codes. The codes constructed  include hybrid classical-quantum stabilizer codes, and motivated by this, we discover hybrid versions of the Bacon-Shor codes. We formulate a theorem that fully characterizes the Pauli errors that are correctable for a given stabilizer code, generalizing the fundamental theorems for QEC and OQEC, and we apply the theorem to calculate the distance of hybrid Bacon-Shor codes.
Further, we show how some recent hybrid subspace code constructions are captured by the formalism. We also show how it extends to the case of qudits and we present examples in that general context. 

This paper is organized as follows. Section~2 includes requisite background material. In Section~3 we give the main details of the formalism, and in Section~4 we formulate and prove the error correction theorem. We present some examples and applications in Section~5, including the hybrid Bacon-Shor codes and a theorem that gives the distance of such codes. Section~6 includes the extension of the formalism to qudits, and Section~7 includes concluding remarks.

\section{Preliminaries}

Given a fixed positive integer $n \geq 1$, let $\mathbb C^N$, with $N = 2^n$, be $N$-dimensional complex Hilbert space with a fixed orthonormal basis $\{ \ket{0}, \ldots , \ket{N-1}\}$, which alternatively can be identified with $(\mathbb C^2)^{\otimes n}$ and orthonormal basis $\{ \ket{i_1\cdots i_n} = \ket{i_1}\otimes \ldots \otimes \ket{i_n} \, : \, i_j = 0 , 1 \}$ via dyadic expansions.  
Further let $M_N = (M_2)^{\otimes n}$ be the set of $N \times N$ complex matrices, which can be viewed as the set of matrix representations of linear transformations $\mathcal B(\mathbb C^N)$ on $\mathbb C^N$ with respect to the basis $\{\ket{k}\}$, and let $\mathcal U(N)$ be the unitary group inside $M_N$. 

We let $\mathcal P_n$ be the usual $n$-qubit Pauli group; that is, the subgroup of $\mathcal U(N)$ generated by $n$-tensors of the single qubit bit flip and phase flip Pauli operators $X$, $Z$, and $iI$ (we shall write $I_m$ for the identity operator on $\mathbb C^m$, or just $I$ when the context is clear); that is, 
\[
X\ket{0} = \ket{1}, \,\, X\ket{1} = \ket{0} \quad \mathrm{and} \quad  Z\ket{0} = \ket{0}, \, \, Z\ket{1} = -\ket{1}, 
\]
and with the corresponding $n$-qubit operators defined as $X_1 = X \otimes (I^{\otimes (n-1)})$, $X_2 = I\otimes  X \otimes (I^{\otimes (n-2)})$, etc. 

Given a subgroup of unitary operators $\mathcal G$ inside $\mathcal B(\mathcal H)$, the set of (bounded linear) operators on a Hilbert space $\mathcal H$, we let  $\mathrm{Alg}(\mathcal G)$ denote the subalgebra of $\mathcal B(\mathcal H)$ generated by $\mathcal G$; in other words, the set of complex polynomials in the elements of $\mathcal G$.  When $\mathcal H$ is finite-dimensional, such an algebra $\mathcal A$ is a (unital) C$^*$-algebra \cite{davidson1996c,paulsen2002completely}, and hence from the structure theory of such algebras, it is unitarily equivalent to a direct sum of the form 
\[
\mathcal A \cong \bigoplus_k (I_{m_k} \otimes M_{n_k})
\]
for some positive integers $m_k, n_k$  with $\sum_k m_k n_k = \dim \mathcal H$. Associated with this unitary equivalence is a decomposition of the Hilbert space $\mathcal H$ as an orthogonal direct sum of subspaces each with its own tensor decomposition, $\mathcal H = \oplus_k (A_k \otimes B_k)$, in which the algebra itself decomposes as $\mathcal A = \oplus_k (I_{A_k} \otimes \mathcal B(B_k))$. Moreover, the set $\mathcal A^\prime$  (also an algebra) of all operators that commute with the algebra, the {\it commutant} of $\mathcal A$, is unitarily equivalent to 
\[
\mathcal A^\prime \cong \bigoplus_k (M_{m_k} \otimes I_{n_k}), 
\] 
which again is determined by the structure of the Hilbert space decomposition as ${\mathcal A}^\prime = \oplus_k (\mathcal B(A_k) \otimes I_{B_k})$. 

Open system quantum dynamics gives us {\it quantum channels}, which are completely positive trace-preserving linear maps $\mathcal E : \mathcal T(\mathcal H) \rightarrow \mathcal T(\mathcal H)$ on the set of trace class operators on $\mathcal H$ \cite{nielsen_chuang_2010,Holevo_2013,paulsen2002completely}. To each channel there is an associated dual map $\mathcal E^\dagger$ defined on $\mathcal B(\mathcal H)$ via the equation: $\mathrm{Tr}(\mathcal E^\dagger(X) \rho ) = \mathrm{Tr}(X \mathcal E(\rho))$. (Observe that $\mathcal E$ is trace-preserving exactly when $\mathcal E^\dagger$ is unital; $\mathcal E^\dagger(I)=I$.) Of course, in the finite-dimensional case the sets $\mathcal T(\mathcal H)$ and $\mathcal B(\mathcal H)$ coincide, but we will still use the different notation to distinguish between the quantum information flow direction under consideration; namely, the   Heisenberg and Schr\"odinger perspectives as discussed in the OAQEC context below, where we focus on correcting observables in the former picture and correcting states in the latter picture. 

Every channel $\mathcal E$ has operator-sum representations \cite{choi1975completely}, which are sets of `Choi-Kraus' operators $\{ E_k \}$ inside $\mathcal B(\mathcal H)$ such that $\mathcal E(\rho) = \sum_k E_k \rho E_k^\dagger$ for all $\rho \in \mathcal T(\mathcal H)$ and $\sum_k E_k^\dagger E_k = I$. In the quantum error context, channels are often referred to as {\it error} or {\it noise models}, and the implementation operators called {\it error operators}. Most importantly for the present work, the class of Pauli error models are central to quantum error correction, and are the subclass of {\it mixed unitary channels} on $\mathbb{C}^N$ of the form $\mathcal E(\rho) = \sum_k p_k U_k \rho U_k^\dagger$, where $U_k\in \mathcal P_n$ and the $p_k$ form a classical probability distribution.

\section{Hybrid Stabilizer Code Construction}

Hybrid classical-quantum codes are codes that can be used to simultaneously encode classical and quantum information. In terms of Hilbert space representations, they are codes with several mutually orthogonal subspaces, each of which carrying certain properties. In this section we construct hybrid codes that both fit into the OAQEC framework and generalize the codes from the original stabilizer formalism. For clarity, we have divided the presentation into four subsections, which include the three core notions and a motivating class of examples. 

\subsection{Stabilizer Subgroup and Code Subspace}

Let $\mathcal S$ be an abelian subgroup of $\mathcal P_n$ that does not contain $-I$, and suppose it has $s$ independent generators. As all elements of the Pauli group either commute or anti-commute up to some power of $iI$, and $\mathcal S$ does not contain the subgroup $\langle iI \rangle$ generated by $iI$, it is easy to see that the normalizer and centralizer of $\mathcal S$ inside $\mathcal P_n$ coincide; 
\[
\mathcal N(\mathcal S) = \{ g\in \mathcal P_n \, \, | \, \, g \mathcal S g^{-1} = {\mathcal S}  \} = 
\{ g\in \mathcal P_n \,\, | \, \, g h = h g \, \, \forall h \in \mathcal S \} = \mathcal Z(\mathcal S).  
\]

Let $C = C(\mathcal S)$ be the {\it stabilizer subspace} for $\mathcal S$, which is the subspace of $\mathbb{C}^N$ defined as the joint eigenvalue-1 eigenspace for $\mathcal S$; that is, 
\[
C = \mathrm{span} \{ \ket{\psi} \, : \, g \ket{\psi} = \ket{\psi} \,\, \forall g\in \mathcal S \}. 
\]
We will let $P$ denote the codespace projector for $C$, the orthogonal projection of $\mathbb C^N$ onto $C$. 
It is well known that $\dim C = 2^{n-s}$ (for instance see the motivating example below). 
The stabilizer subspace is the base code for an OAQEC stabilizer code, which will encode further structure as described below.

\subsection{Gauge Group and Logical Operations}

Let us first discuss some relevant operator theoretic notions. Given any element $g$ of $\mathcal N(\mathcal S) = \mathcal Z(\mathcal S)$, the subspace $C$ is a reducing subspace for $g$; that is, both the subspace and its orthogonal complement are invariant for $g$. Indeed, if $g$ commutes with every element of $\mathcal S$, then $g P = P g$ as $P$ is equal to a polynomial in the elements of $\mathcal S$, which follows from the joint spectral functional calculus for those elements (an explicit formula is given in Section~6). Hence, $gP = PgP$ and $gP^\perp = P^\perp g P^\perp$ where $P^\perp = I - P$, which are the invariant subspace conditions for $C$ and $C^\perp$ as operator relations. 
Observe that if $C$ is a reducing subspace for every operator in an algebra $\mathcal A$, then $\mathcal A P$ is a subalgebra of $\mathcal B(\mathbb{C}^N)$ which is fully supported on $C$. We will call $\mathcal A P = P \mathcal A = P \mathcal A P$ the `compression algebra' of $\mathcal A$ to $C$. In such a case, as a notational convenience to distinguish between that algebra and the corresponding algebra of operators restricted to $C$ (so a subalgebra of $\mathcal B(C)$), we shall write $\mathcal A |_C$ for the latter.   

We now turn to the subsystem structure generated by a stabilizer subspace. Our formulation here is a little more abstract than that of \cite{Poulin2005Stabilizer}, with an eye toward possible extensions of this formalism as noted in Section~7. Thus, suppose we can find subsets $\mathcal G_0$ and $\mathcal L_0$ of $\mathcal N(\mathcal S) = \mathcal Z(\mathcal S)$ with the following properties: 
\begin{itemize}
\item[$\bullet$] The compression algebra $\mathrm{Alg}(\mathcal G_0) P$, respectively $\mathrm{Alg}(\mathcal L_0)P$,  is unitarily equivalent to a full matrix algebra $M_{2^r}$ for some positive integer $r$, respectively to $M_{2^k}$ for some positive integer $k$. (The motivating example presented below shows how this arises through anti-commuting pairs of Pauli operators.)

\item[$\bullet$] The sets $\mathcal G_0$ and $\mathcal L_0$ are mutually commuting; $[g,L]=0$ for all $g\in \mathcal G_0$, $L\in \mathcal L_0$. 

\item[$\bullet$] The normalizer subgroup $\mathcal N(\mathcal S) $ is generated by $\mathcal S$, $iI$, $\mathcal G_0$, and $\mathcal L_0$. 
\end{itemize}

We will assume these sets are minimal with these properties (in particular, an element of the set cannot be obtained as a product of other elements). The group $\mathcal G$ defined as
\begin{equation} \label{eqn:def_gauge_group}
    \mathcal G = \langle \mathcal S, iI, \mathcal G_0 \rangle,
\end{equation}
is called the {\it gauge group} for the code, and the group
\begin{equation} \label{eqn:def_logical_group}
    \mathcal L = \langle \mathcal L_0, iI \rangle,
\end{equation}
is called the {\it logical group}. The conditions ensure the normalizer is determined by the gauge and logical groups via the group isomorphism $\mathcal N(\mathcal S) \times \langle i I \rangle \cong \mathcal G \times \mathcal L$. 
Choices of such subsets can be made using well-known properties of the Pauli group, using ($r$ and $k$ respectively) anti-commuting pairs of operators that mutually commute, and that commute with the other set and the stabilizers. 

The subsystem structure that these subgroups generate is given in the following result. This can be proved straightforwardly as a consequence of the above formulation, together with the structure theory of algebras and their commutants described in the previous section. 

\begin{lemma}\label{reducinglemma}
Let $C$ be a code subspace with gauge group $\mathcal G$ and logical group $\mathcal L$ as chosen above. Then $C$ is a reducing subspace for both $\mathcal G$ and $\mathcal L$, and $C$ decomposes as a tensor product of subsystems $C = A \otimes B$ with $A \cong (\mathbb C^2)^{\otimes r}$, $B \cong (\mathbb C^2)^{\otimes k}$, and $r+ k = n - s$, such that 
\[
\left\{ 
\begin{array}{rcl}
\mathrm{Alg} (\mathcal G)|_C & = & \mathcal B(A) \otimes I_B   \\ 
\mathrm{Alg} (\mathcal L)|_C & = & I_A \otimes \mathcal B(B) 
\end{array}
\right. ,
\]
where $\mathcal B(A) \cong M_{2^r}$ and $\mathcal B(B) \cong M_{2^k}$. 
\end{lemma}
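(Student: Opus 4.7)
The plan is to establish the reducing subspace property for both groups, use the compressions to $C$ to extract two commuting full matrix factors in $\mathcal B(C)$, and then apply the C$^*$-algebra structure theory summarized in Section~2 to get the tensor decomposition; a dimension count finishes the identification.

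First I would observe that every generator in $\mathcal S$, $\mathcal G_0$, and $\mathcal L_0$ lies in $\mathcal N(\mathcal S)=\mathcal Z(\mathcal S)$, and $iI$ is central, so every element of $\mathcal G$ and $\mathcal L$ commutes with the projector $P$ by the paragraph preceding the lemma (since $P$ is a polynomial in $\mathcal S$). Hence $C$ is reducing for both groups and the compressions $\mathrm{Alg}(\mathcal G)|_C$ and $\mathrm{Alg}(\mathcal L)|_C$ are well-defined unital subalgebras of $\mathcal B(C)$. On $C$ the stabilizers act as $I_C$ and $iI$ contributes only scalars, so $\mathrm{Alg}(\mathcal G)|_C=\mathrm{Alg}(\mathcal G_0)|_C$ and $\mathrm{Alg}(\mathcal L)|_C=\mathrm{Alg}(\mathcal L_0)|_C$. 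By the hypotheses on $\mathcal G_0$ and $\mathcal L_0$, these are full matrix algebras, isomorphic to $M_{2^r}$ and $M_{2^k}$ respectively, and they commute with one another because $\mathcal G_0$ commutes with $\mathcal L_0$ elementwise.

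Now I would invoke the structure theorem for finite-dimensional C$^*$-algebras recalled in Section~2. Since $\mathrm{Alg}(\mathcal G)|_C \cong M_{2^r}$ is a simple factor acting on $C$, there is a unitary equivalence $C \cong A \otimes B$ with $\dim A = 2^r$ in which $\mathrm{Alg}(\mathcal G)|_C = \mathcal B(A) \otimes I_B$, and its commutant in $\mathcal B(C)$ is $I_A \otimes \mathcal B(B)$. The commutation established above then places $\mathrm{Alg}(\mathcal L)|_C$ inside $I_A \otimes \mathcal B(B)$. Since $\mathrm{Alg}(\mathcal L)|_C$ is itself a full matrix factor isomorphic to $M_{2^k}$, the equality $\mathrm{Alg}(\mathcal L)|_C = I_A \otimes \mathcal B(B)$ reduces to the dimension statement $\dim B = 2^k$, equivalently $r + k = n - s$.

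This dimension count is where I expect the real work, and I would handle it via the group-theoretic hypothesis $\mathcal N(\mathcal S) \times \langle iI \rangle \cong \mathcal G \times \mathcal L$. The quotient of $\mathcal N(\mathcal S)$ by $\mathcal S$ and the central phases is a standard symplectic $\mathbb{F}_2$-space of dimension $2(n-s)$, with symplectic form given by Pauli commutation. The requirement that $\mathrm{Alg}(\mathcal G_0)|_C \cong M_{2^r}$ forces the images of $\mathcal G_0$ in this quotient to form $r$ independent anticommuting pairs (and similarly $k$ pairs for $\mathcal L_0$), and the mutual commutativity of $\mathcal G_0$ and $\mathcal L_0$ makes these $2r + 2k$ vectors symplectically orthogonal across the two sets. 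Minimality together with the fact that $\mathcal S$, $iI$, $\mathcal G_0$, $\mathcal L_0$ generate $\mathcal N(\mathcal S)$ forces these classes to span the whole $2(n-s)$-dimensional quotient, giving $2r + 2k = 2(n-s)$. Combined with the previous paragraph this yields the claimed identifications and completes the proof.
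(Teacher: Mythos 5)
Your proposal is correct and follows exactly the route the paper indicates: the paper gives no explicit proof, stating only that the lemma follows ``straightforwardly'' from the formulation together with the structure theory of finite-dimensional C$^*$-algebras and their commutants, and your argument (reducing subspace via commutation with $P$, compression algebras as full matrix factors, commutant structure, and the symplectic dimension count giving $r+k=n-s$) is a faithful and correct elaboration of that sketch.
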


Note that here the subsystem $B$ encodes the logical qubits of the code. 
Further observe that an empty gauge set $\mathcal G_0$ leads to a standard subspace code ($\dim A = 1$), whereas a nonempty $\mathcal G_0$ generates subsystem structure in the code (that is, when $\dim A >1$).

\subsection{Normalizer Cosets and Hybrid Code Sectors}

Let us now turn to the notion that generates hybrid codes. As a group theoretic observation that will be relevant below, first note that the left and right cosets of $\mathcal N(\mathcal S)$ inside $\mathcal P_n$ coincide; that is, $g \mathcal N(\mathcal S)  = \mathcal N(\mathcal S)  g$ for all $g\in \mathcal P_n$. This follows from the anti-commutation relations of $\mathcal P_n$ and the fact that $\mathcal N(\mathcal S) $ contains $\langle  i I \rangle$.  

Let $\mathcal T \subseteq \mathcal P_n$  be a maximal set of coset representatives for $\mathcal N(\mathcal S)$ inside $\mathcal P_n$ (a so-called coset {\it transversal} for $\mathcal N(\mathcal S)$ as a subgroup of $\mathcal P_n$), and without loss of generality assume $I\in \mathcal T$ is the representative for the normalizer itself. Then the full group is equal to the (disjoint) union $\mathcal P_n = \cup_{g\in \mathcal T} \, g \mathcal N(\mathcal S) $, and the cardinality of $\mathcal T$ is equal to $|\mathcal T| = | \mathcal P_n | / |\mathcal N(\mathcal S) | = 2^s$ (see the motivating example for an explicit calculation). 

Taking terminological motivation from other areas, such as the notion of `charge sectors' in the study of topological codes \cite{https://doi.org/10.48550/arxiv.hep-th/9511201,Kitaev2003Anyons}, we shall use the term {\it code sector} to refer to the (quantum) code defined  by a given $T\in \mathcal T$ and the elements that define the base code: $\mathcal S$, $\mathcal L$, $\mathcal G$. Specifically, the code sector for $T$ is defined by the collection of operators given by the sets $T {\mathcal S} T^{-1}$, $T {\mathcal L} T^{-1}$, $T {\mathcal G} T^{-1}$, and then the associated codespace $T C$. 

The key observation concerning normalizer cosets in this setting is the following, which shows that the subgroup and coset  structure induces orthogonality at the Hilbert space level. 

\begin{lemma}\label{cosetlemma}
Let $\mathcal S$ be an abelian subgroup of $\mathcal P_n$ that does not contain $- I$, and let $C$ be its stabilizer subspace. If $\mathcal T$ is a selection of coset representatives for $\mathcal N(\mathcal S)$ inside $\mathcal P_n$, then for all $g_1,g_2\in \mathcal T$ with $g_1\neq g_2$ we have 
\[
P g_1^{-1} g_2 P= 0, 
\]
where $P$ is the orthogonal projection of $\mathbb{C}^N$ onto $C$; in other words, $g_1\ket{\psi_1}$ is orthogonal to $g_2\ket{\psi_2}$ for any choice of states $\ket{\psi_1}, \ket{\psi_2} \in C$.  
\end{lemma}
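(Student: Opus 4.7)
The plan is to reduce the claim to a simple sandwiching identity for the projector $P$, powered by the commute/anti-commute dichotomy of Pauli operators and the fact that $\mathcal{S}$ stabilizes $C$ exactly.

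First I would observe the group-theoretic content: since $g_1$ and $g_2$ lie in distinct cosets of $\mathcal{N}(\mathcal{S})$, the element $g := g_1^{-1} g_2 \in \mathcal{P}_n$ is not contained in $\mathcal{N}(\mathcal{S})$. Because $\mathcal{N}(\mathcal{S}) = \mathcal{Z}(\mathcal{S})$ (noted earlier from the hypothesis that $\mathcal{S}$ is abelian and does not contain $-I$), this means $g$ fails to commute with at least one $h \in \mathcal{S}$. Here I invoke the standard Pauli dichotomy: any two elements of $\mathcal{P}_n$ satisfy $gh = \pm hg$, so $gh \neq hg$ actually forces the exact anti-commutation $gh = -hg$.

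Next I would use the defining property of $C$ as the joint $+1$-eigenspace of $\mathcal{S}$: for every $h \in \mathcal{S}$ one has $hP = Ph = P$. Inserting such an $h$ on each side of $g$ in the sandwich $PgP$ and applying the anti-commutation relation yields
\[
PgP \;=\; (Ph)\,g\,P \;=\; P(hg)P \;=\; -P(gh)P \;=\; -(Pg)(hP) \;=\; -PgP,
\]
whence $PgP = 0$, which is exactly the claim $P g_1^{-1} g_2 P = 0$. The orthogonality statement for states then follows immediately, since for $\ket{\psi_1},\ket{\psi_2} \in C$ we have $\bra{\psi_1}g_1^{-1}g_2\ket{\psi_2} = \bra{\psi_1} P g_1^{-1} g_2 P \ket{\psi_2} = 0$, and $g_1$ is unitary so this equals $\braket{g_1\psi_1|g_2\psi_2}$.

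I do not anticipate any real obstacle here; the only subtlety worth flagging is the passage from "$g$ does not centralize $\mathcal{S}$" to "$g$ anti-commutes with some specific $h \in \mathcal{S}$", which relies on the exact $\pm 1$ dichotomy for Pauli words rather than a more generic commutator estimate. Everything else is a one-line manipulation using $hP = Ph = P$.
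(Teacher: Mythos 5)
Your proof is correct and follows essentially the same route as the paper: reduce to $g = g_1^{-1}g_2 \notin \mathcal N(\mathcal S) = \mathcal Z(\mathcal S)$, extract an element of $\mathcal S$ that anti-commutes with $g$, and sandwich with $P$ using $hP = Ph = P$ to force $PgP = 0$. The only cosmetic difference is that the paper keeps the commutation phase as a general unimodular $z \neq 1$ (writing $(1-z)PgP = 0$) so the identical argument carries over verbatim to the qudit setting of Section~6, whereas you specialize to $z = -1$; for the qubit statement as given this loses nothing.
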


\begin{proof}
As $g_1$ and $g_2$ are representatives from different cosets, we have $g := g_1^{-1} g_2 \notin \mathcal N(\mathcal S)$. Since the normalizer coincides with $\mathcal Z(\mathcal S)$, and all elements of $\mathcal P_n$ commute modulo a power of $iI$, it follows that there is some $E\in \mathcal S$ and $z\in \mathbb C$, with $|z|=1$ and $z\neq 1$, such that 
\[
gE = z Eg.
\]
(In fact here in the qubit $d=2$ case we must have $z=-1$. We keep the logical statement more general as we will make use of this argument in the qudit case.) 
Note that $EP = P$ from the definition of $C$ and because $E\in \mathcal S$. Further, $EP = PE$ as $P$ is equal to the product of polynomials in elements of $\mathcal S$ from spectral theory functional calculus as discussed above. Thus we have, 
\[
PgP = P g E P = P(z E g)P = z P E g P = z EP gP = z PgP,  
\] 
and so $(1-z) PgP = 0$. But $z\neq 1$, and hence $P g P =0$ as required. 
\end{proof}

\begin{remark}
We thus have stabilizer codes that generalize both the original (subspace) setting of Gottesman, which is captured with the singleton coset representative subset ($\mathcal T_0 = \{ I \}$) and abelian gauge group (with empty set $\mathcal G_0 = \emptyset$), and then the OQEC (subsystem) setting of Poulin, which is captured with the singleton coset representative ($\mathcal T_0 = \{ I \}$) subset and non-trivial gauge group ($\mathcal G \neq \emptyset$). 

\begin{figure}
    \centering
    \includegraphics[width=0.65\linewidth]{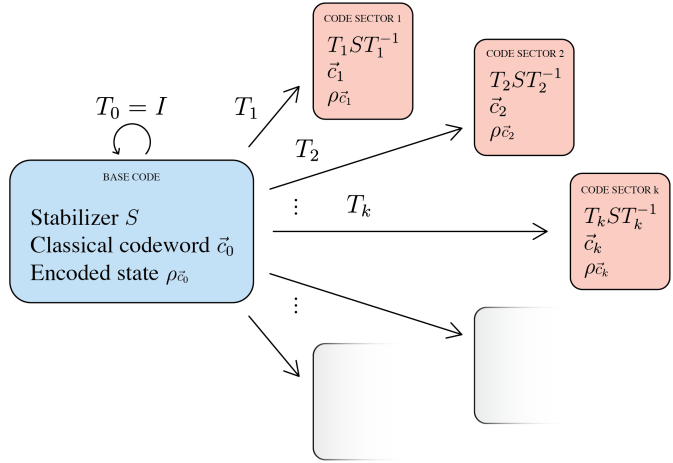}
    \caption{Hybrid stabilizer code illustration. The coset representatives $\mathcal T_0 = \{T_0 = I, T_1, T_2, \dots, T_k \} \subset \mathcal T$ define different code sectors, each characterized by a classical codeword represented by a bit-string $\vec c_i$, 
    with $0 \leq i \leq k$, 
    its associated encoded state $\rho_{\vec c_i}$ and sets $T_i \mathcal S T_i^{-1}$, $T_i \mathcal L T_i^{-1}$, $T_i \mathcal G T_i^{-1}$ (only the stabilizers are shown for brevity). 
    Note that here $\mathcal T_0$ is a strict subset of $\mathcal T$.
    }
    \label{fig:code-sectors}
\end{figure}

Moreover, any code defined by a subset $\mathcal T_0 \subseteq \mathcal T$ with $|\mathcal T_0|>1$ will be a hybrid classical-quantum code, which will have a sub{\it space} base code (formally $C = A \otimes B$ with $A = \mathbb C$) when the gauge group is abelian and a sub{\it system} base code ($C=A\otimes B$ with $\dim A > 1$) otherwise. The size of the subset $\mathcal T_0$ determines the number of  `classical addresses' associated with the hybrid code, as is illustrated in Fig.~\ref{fig:code-sectors}.  For instance,  by Lemma~\ref{cosetlemma}, any $g\notin \mathcal N(\mathcal S)$ gives a coset $g \mathcal N(\mathcal S)$ for which the subspace $g C$ is orthogonal to $C$, and hence it defines a 1-bit, $k$-qubit hybrid code (which may have further subsystem structure when the gauge group is non-abelian).   
\end{remark}

\subsection{Motivating Class of Examples}\label{motivating_section}

We can build upon the motivating example from the original stabilizer formalism to illustrate the various concepts discussed above for a relatively simple class of examples. The following operators are defined on $n$-qubit Hilbert space. 

\begin{itemize}
\item[$\bullet$] Let $s \leq n$ be a fixed positive integer and let $\mathcal S = \langle Z_1, \ldots , Z_s \rangle$; the subgroup of $\mathcal P_n$ generated by phase flip operators on the first $s$ qubits. 
Then 
\[
C = C(\mathcal S) = \mathrm{span} \big\{ \ket{ \underbrace{0 \cdots 0}_{s} \, i_1 \cdots i_{n-s}  } \, : \, i_j = 0,1 \big\}, 
\]
and $\dim C = 2^{n-s}$ so that $C$ can encode $n-s$ qubits. 

\item[$\bullet$] Let $r$ be a fixed integer with $0 \leq r \leq n-s$, and let $\mathcal G_0$ be the set of $r$ pairs of Pauli operators acting on qubits $s+1$ to $r+s$: 
\[
\mathcal G_0 = \big\{  X_i, Z_i \, : \, s+1 \leq i \leq r+s \big\}. 
\]
Then the gauge group $\mathcal G$ is generated by $\mathcal S$, $iI$, and $\mathcal G_0$, and includes the full Pauli subgroup of operators acting non-trivially on the $r$ `gauge qubits'.  

\item[$\bullet$] Let $k = n - s - r$,  and let $\mathcal L_0$ be the set of $k$ pairs of Pauli operators acting on qubits $r+s+1$ to $n$:
 \[
\mathcal L_0 = \big\{  X_i, Z_i \, : \, r+s+1 \leq i \leq n \big\}. 
\]
Then the logical group $\mathcal L$ is the group generated by $\mathcal L_0$ and $iI$, and includes the full Pauli subgroup of operators acting non-trivially on the $k$ `logical qubits'. 

\item[$\bullet$] The normalizer $\mathcal N(\mathcal S) = \mathcal Z(\mathcal S)$ for $\mathcal S$ inside $\mathcal P_n$ in this case is given by the following set of operators: 
\[
\mathcal N(\mathcal S) = \Big\{   i^c \, \cdot \, Z_1^{b_1} \cdots Z_s^{b_s} \, \cdot \,  X_{s+1}^{a_{s+1}} Z_{s+1}^{b_{s+1}} \cdots    X_{n}^{a_{n}} Z_{n}^{b_{n}} \, : \, 0\leq c \leq 3, \,\, 0 \leq a_j, b_j \leq 1   \Big\}. 
\]
\end{itemize}

The size of the normalizer here is thus $|\mathcal N(\mathcal S)| = 4 \cdot 2^s \cdot 4^{n-s} = 2^{2 + 2n -s}$. The full Pauli group $\mathcal P_n$ has $4^{n+1}$ elements (as every element can be uniquely written in the form $i^c  X_{1}^{a_{1}} Z_{1}^{b_{1}} \cdots    X_{n}^{a_{n}} Z_{n}^{b_{n}}$), and hence the number of normalizer cosets is given by, 
\[
|\mathcal P_n| / |\mathcal N(\mathcal S)| = 2^{2  + 2n - (2 + 2n -s)}  = 2^s. 
\]

Observe that each of the operators $X_j$, $1\leq j \leq s$, do not belong to $\mathcal N(\mathcal S)$. Hence we can take as a set of canonical coset representatives, the transversal given by the following $2^s$-element set: 
\[
\mathcal T = \big\{ X_1^{a_1} \cdots X_s^{a_s} \, : \, 0 \leq a_j \leq 1 \big\}.  
\]
As a caveat, however, we note that there are many other choices of coset representatives, which could have different algebraic properties as it relates to the code generators. As a simple example, note that $X_i$, with $1 \leq i \leq s$, and $X_i N$, for some fixed $N\in \mathcal N(\mathcal S)$, generate the same coset, and so in particular a transversal need not consist entirely of mutually commuting operators, or even operators that commute with the gauge and logical operators. 

Regarding the Hilbert space decomposition generated by this example, notice that the gauge and logical operators induce a tensor decomposition for the base code subspace $C = A \otimes B$, where $A \cong (\mathbb C^2)^{\otimes r}$,  $B \cong (\mathbb C^2)^{\otimes k}$, and this tensor structure naturally translates to any of the subspaces $T C$, for $T\in \mathcal T$, as $T$ is unitary. (Recall here that $n-s = r+k$, and the base code encodes $k$ logical and $r$ gauge qubits.)  The subspace $C$ is easily seen to be invariant for each of the gauge and logical operators, and evidently for every $A\in \mathcal G_0$ and $B\in \mathcal L_0$, there are operators $A_1\in\mathcal B(A)$ and $B_1\in\mathcal B(B)$ such that 
\[
\left\{ 
\begin{array}{rcl}
A|_C & = & A_1 \otimes I_B   \\ 
B|_C & = & I_A \otimes B_1 
\end{array}
\right. ,  
\]
which is all true in general by Lemma~\ref{reducinglemma}. 
Given a (non-trivial) subset of coset representatives $\mathcal T_0 \subseteq \mathcal T$, the subspaces $T C$, $T\in\mathcal T_0$, are mutually orthogonal (in general this is true by Lemma~\ref{cosetlemma}) and the corresponding subspace for the hybrid code is $C_{\mathcal T_0} = \oplus_{T\in\mathcal T_0} T C$.

In Sections~5 and 6 we will give further examples of the general hybrid code construction and discuss them in detail. We next we turn to an analysis of what are the possible errors that a given hybrid stabilizer code can protect against. 

\section{Error Correction Theorem}

The code construction above thus defines codes $C = C(\mathcal S, \mathcal G_0, \mathcal L_0, \mathcal T_0)$, determined by, respectively, choices of stabilizer subgroup, gauge and logical  operators, and subset of coset representatives. We shall characterize what sets of Pauli errors are correctable for a given code, and in doing so, we establish a generalization of the fundamental theorems of \cite{gottesman1996class,gottesman1997stabilizer} and \cite{Poulin2005Stabilizer} to this setting.

We shall first recall the basic notions and relevant results from OAQEC \cite{Beny2007Generalization,Beny2007Quantum}, and then specify to the code framework formulated above. The starting point is the basic definition of OAQEC codes, which is most conveniently introduced in the Heisenberg picture.

\begin{definition}\label{correctable}
An algebra $\mathcal A \subseteq \mathcal B(\mathcal H)$ of operators on $\mathcal H$ is {\em correctable} for an error model $\mathcal E$ if there exists a channel $\mathcal R$ such that $\mathcal A$ is conserved by $\mathcal R \circ \mathcal E$ on states in $Q\mathcal H$ where $Q$ is the unit projection of $\mathcal A$; that is,
\begin{equation}\label{Heiseneqn}
Q(\mathcal R \circ \mathcal E)^\dagger(X)Q = X \quad \forall
X\in\mathcal A.
\end{equation}
\end{definition}

Given the unitary equivalence form for an algebra $\mathcal A \cong \oplus_i ( I_{m_i} \otimes M_{n_i})$, the unit element of $\mathcal A$ is the projection $Q$ in the algebra corresponding under the equivalence to $\oplus_i I_{m_i} \otimes I_{n_i}$. There is a more general notion of OAQEC code considered in \cite{Beny2007Generalization,Beny2007Quantum}, wherein the unit element of $\mathcal A$ is replaced by an arbitrary projection on the Hilbert space, and one can then consider correction with respect to states supported on the corresponding range subspace of the projection. But the notion of correctable we consider here allows us to unambiguously discuss `correction of an algebra', and is sufficient for our goal to extend the stabilizer formalism to this setting.

The corresponding Schr\"odinger picture description is given as follows: $\mathcal A$ is correctable for $\mathcal E$ if and only if there exists a channel $\mathcal R$ such that for any density operator $\rho = \sum_i \alpha_i (\tau_i \otimes \rho_i)$ with $\tau_i \in \mathcal T(A_i)$, $\rho_i \in \mathcal T(B_i)$, and nonnegative scalars $\sum_i \alpha_i=1$, there are density operators $\tau_i' \in \mathcal T(A_i)$ for which
\begin{equation}\label{Schroeqn}
(\mathcal R\circ\mathcal E)(\rho) = \sum_i \alpha_i \mathcal R
\bigl({\mathcal E \bigl({\tau_i \otimes \rho_i}\bigr)}\bigr) =
\sum_i \alpha_i ( \tau_i' \otimes \rho_i).
\end{equation}
From this perspective, one can see that each of the subsystems $B_i$ (with $\dim B_i >1$) can be used individually to encode quantum information that can be recovered. Moreover, an extra feature of such a code is that an arbitrary mixture of encoded states, one for each subsystem, can be simultaneously corrected by the same correction operation.

To generalize the main error correction theorem from previous stabilizer formalism settings, we need a description in terms of error operators. The following result from \cite{Beny2007Generalization,Beny2007Quantum} gives such a description, and below we formulate it in a style that we will use.

\begin{theorem}
\label{found} Let $\mathcal A$ be a subalgebra of $\mathcal B(\mathcal H)$ with unit projection $Q$. The following statements are equivalent:
\begin{enumerate}
\item $\mathcal A$ is correctable for $\mathcal E(\rho) = \sum_k E_k \rho E_k^\dagger$.  
\item $[Q E_k^\dagger E_l^{} Q, X] = 0$ for all $X \in \mathcal A$ and all $k,l$.
\end{enumerate}
\end{theorem}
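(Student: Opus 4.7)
The plan is to exploit the algebraic decomposition of $\mathcal A$ and argue the two directions separately. Decomposing $Q\mathcal H \cong \oplus_i (A_i \otimes B_i)$ so that $\mathcal A \cong \oplus_i (I_{A_i} \otimes \mathcal B(B_i))$, and letting $P_i$ denote the projection onto $A_i \otimes B_i$, the commutant of $\mathcal A$ inside $\mathcal B(Q\mathcal H)$ is $\oplus_i \mathcal B(A_i) \otimes I_{B_i}$. Since each $P_i$ lies in $\mathcal A$ (as the unit of its summand), condition (2) is equivalent to the simultaneous statements $P_i (Q E_k^\dagger E_l Q) P_j = 0$ for $i \neq j$ and $P_i E_k^\dagger E_l P_i = M_{kl}^{(i)} \otimes I_{B_i}$ for some $M_{kl}^{(i)} \in \mathcal B(A_i)$, i.e.\ the Knill-Laflamme / OQEC correctability conditions holding separately on every sector.

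For $(2) \Rightarrow (1)$, I would construct the recovery sectorwise. Within each sector $i$, Poulin's OQEC-style construction (Kraus freedom plus polar decomposition adapted to the $A_i$ factor) provides a recovery $\mathcal R^{(i)}$ that restores $\mathcal B(B_i)$ on states supported in $A_i \otimes B_i$. Crucially, the cross-sector vanishing $P_i E_k^\dagger E_l P_j = 0$ means the ranges $E_k (A_i \otimes B_i)$ are mutually orthogonal across distinct $i$, so a single nondemolition projective measurement can identify the sector before the appropriate $\mathcal R^{(i)}$ is applied; the combined map $\mathcal R$ satisfies the Schr\"odinger-picture correctability equation (\ref{Schroeqn}).

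For $(1) \Rightarrow (2)$, let $\mathcal R$ have Kraus operators $\{R_j\}$, so that $\mathcal F = \mathcal R \circ \mathcal E$ has Kraus operators $\{F_{jk} = R_j E_k\}$. Substituting $X = Q$ into (\ref{Heiseneqn}) and combining with $\sum_{jk} F_{jk}^\dagger F_{jk} = I$ yields $\sum_{jk} Q F_{jk}^\dagger (I-Q) F_{jk} Q = 0$, forcing $(I-Q) F_{jk} Q = 0$, so each $F_{jk}$ preserves $Q\mathcal H$. The resulting restrictions $\tilde F_{jk}$ define a unital CP map on $\mathcal B(Q\mathcal H)$ that fixes $\mathcal A$ pointwise. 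Because $\mathcal A$ is $*$-closed, $\tilde{\mathcal F}^\dagger(X^\dagger X) = X^\dagger X = \tilde{\mathcal F}^\dagger(X)^\dagger \tilde{\mathcal F}^\dagger(X)$ for every $X \in \mathcal A$, saturating the Kadison-Schwarz inequality; the standard multiplicative-domain characterization of unital CP maps then yields $[\tilde F_{jk}^\dagger \tilde F_{j'k'}, X] = 0$ for all $X \in \mathcal A$. Summing over $j = j'$ and using $\sum_j R_j^\dagger R_j = I$ collapses the Kraus products to $Q E_k^\dagger E_{k'} Q$, producing condition (2).

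The main obstacle is the multiplicative-domain step in the necessity direction: promoting saturation of Kadison-Schwarz on a $*$-subalgebra to commutation with every Kraus-product operator requires either a direct Stinespring-dilation argument or an invocation of a classical fixed-point theorem of Arveson-Lindblad-Choi type. I would either prove this auxiliary lemma directly in the Stinespring picture or, consistent with the rest of the paper, cite the corresponding statement from \cite{Beny2007Generalization,Beny2007Quantum} where the OAQEC foundations are established.
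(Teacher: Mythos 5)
The paper does not prove this theorem at all---it is quoted directly from the OAQEC foundational works \cite{Beny2007Generalization,Beny2007Quantum}, so there is no in-paper proof to compare against. Your argument is correct and is essentially a reconstruction of the proof given in those references: the identification of condition (2) with the block form $Q E_k^\dagger E_l Q = \oplus_i M_{kl}^{(i)} \otimes I_{B_i}$ via the commutant $\oplus_i \mathcal B(A_i)\otimes I_{B_i}$, the sectorwise recovery preceded by a sector-identifying measurement (legitimate because the off-diagonal vanishing makes the images $\mathcal E$ of the distinct sectors mutually orthogonal, and because Eq.~(\ref{Schroeqn}) only demands correction of block-diagonal states), and the necessity direction via the fixed-point/multiplicative-domain property of the unital CP map $Q(\mathcal R\circ\mathcal E)^\dagger(\cdot)Q$ restricted to $Q\mathcal H$. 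One remark: the step you flag as the main obstacle is not really one. For a unital CP map $\Phi(X)=\sum_a K_a^\dagger X K_a$ fixing a $*$-closed $\mathcal A$ pointwise, the identity $\sum_a (XK_a - K_a X)^\dagger(XK_a - K_a X) = \Phi(X^\dagger X) - \Phi(X)^\dagger\Phi(X) = 0$ immediately forces $[K_a,X]=0$ for every Kraus operator and every $X\in\mathcal A$; no Stinespring dilation or external fixed-point theorem is needed, and summing $\tilde F_{jk}^\dagger\tilde F_{jk'}$ over $j$ then yields $[QE_k^\dagger E_{k'}Q,X]=0$ as you describe. With that lemma written out, your proof is complete and self-contained, which is arguably more than the paper offers for this statement.
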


It follows from Theorem~\ref{found}, using the structure of finite dimensional algebras and their commutants discussed above, that there is a correction operation $\mathcal R$ for which Eq.~(\ref{Schroeqn}) is satisfied if and only if for all $k,l$ there are operators $X_{kli} \in \mathcal B(A_i)$ such that
\begin{equation}\label{oaqeceqn}
Q E_k^\dagger E_l^{} Q = \sum_i  X_{kli} \otimes I_{B_i},
\end{equation}
where here the operators $X_{kli} \otimes I_{B_i}$ are understood to act on $A_i \otimes B_i$, and so the sum (when there is a sum with more than one term) is thus an orthogonal direct sum of operators. The case of a sum with a single term captures the well-known Knill-Laflamme error correction conditions \cite{knill1997theory} when $\dim A_1 = 1$ (and so $X_{kl1}$ are complex scalars), and the OQEC testable conditions \cite{Kribs2005Unified,Kribs2006oqec} when $\dim A_1 > 1$.

Now let us specify to our setting. Further notation will be introduced in the proof below, but we will note here that the algebras associated with the code constructions of the previous section in their unitary equivalence form satisfy $m_i = m_{j}$ and $n_i = n_{j}$ for any two pair of indices $i,j$. Moreover, by saying a code  $C = C(\mathcal S, \mathcal G_0, \mathcal L_0, \mathcal T_0)$ is correctable, we mean the algebra determined by it, as in the previous section and the discussion above, is OAQEC-correctable.

\begin{theorem} \label{errorcorrecthm}
A code $C = C(\mathcal S, \mathcal G_0, \mathcal L_0, \mathcal T_0)$, with $\mathcal T_0 = \{ g_i \}_i$,  is correctable for a set of error operators $\{E_k\} \subseteq \mathcal P_n$ if and only if for all $k,l$,  
\begin{equation}\label{stabformoaqeccond}
E_k^\dagger E_l \notin \Big(  \mathcal N(\mathcal S)\setminus \mathcal G \Big)  \bigcup \Big(  \bigcup_{i\neq j} g_i \mathcal N(\mathcal S) g_j^{-1}   \Big) .
\end{equation}
\end{theorem}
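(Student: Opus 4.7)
The plan is to apply the OAQEC characterization in Eq.~(\ref{oaqeceqn}) to the hybrid code algebra and decode the resulting operator identity using the two preceding lemmas. For $\mathcal T_0=\{g_i\}$, the algebra to be corrected is $\mathcal A=\bigoplus_i (I_{A_i}\otimes\mathcal B(B_i))$ with each block living on $g_i C = A_i\otimes B_i$, and its unit projection is $Q=\sum_i g_i P g_i^{-1}$, where $P$ is the codespace projector for $C$. Correctability is then equivalent, via Eq.~(\ref{oaqeceqn}), to $Q E_k^\dagger E_l Q$ lying in the commutant $\mathcal A' = \bigoplus_i (\mathcal B(A_i)\otimes I_{B_i})$ for every $k,l$. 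Writing $F = E_k^\dagger E_l$, I would expand
\[
Q F Q \,=\, \sum_{i,j} g_i \bigl( P\, g_i^{-1} F g_j\, P\bigr) g_j^{-1}
\]
and analyze the diagonal ($i=j$) and off-diagonal ($i\neq j$) contributions separately; each of the two sets in~(\ref{stabformoaqeccond}) will emerge from one of these.

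For the off-diagonal terms, the commutant $\mathcal A'$ has no blocks connecting distinct sectors, so each such contribution must vanish. Reusing the key step in the proof of Lemma~\ref{cosetlemma} --- any Pauli $h\notin\mathcal N(\mathcal S)$ anticommutes with some $E\in\mathcal S$, forcing $P h P=-P h P=0$, while any $h\in\mathcal N(\mathcal S)$ gives $P h P=hP\neq 0$ --- the vanishing of $P\,g_i^{-1}F g_j\,P$ is equivalent to $g_i^{-1}F g_j \notin \mathcal N(\mathcal S)$, i.e., $F\notin g_i\mathcal N(\mathcal S) g_j^{-1}$. Quantifying over all $i\neq j$ produces the second set in~(\ref{stabformoaqeccond}).

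For each diagonal block, after conjugation by $g_i$ I would ask when $P(g_i^{-1}F g_i)P$ restricted to $C$ has the form $X\otimes I_B$. Three sub-cases arise. If $g_i^{-1}F g_i \notin\mathcal N(\mathcal S)$, the block vanishes by the same anticommutation argument, which is vacuously fine. If $g_i^{-1}F g_i\in\mathcal G$, Lemma~\ref{reducinglemma} identifies its compression with an element of $\mathcal B(A)\otimes I_B$. If $g_i^{-1}F g_i\in\mathcal N(\mathcal S)\setminus\mathcal G$, the generating set of $\mathcal N(\mathcal S)$ lets me write it as $g\cdot l$ with $g\in\mathcal G$ and $l\in\mathcal L\setminus\langle iI\rangle$, and Lemma~\ref{reducinglemma} gives compression $g_A\otimes l_B$ with $l_B$ non-scalar, so the required form fails. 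Hence the diagonal condition is $g_i^{-1}F g_i\notin\mathcal N(\mathcal S)\setminus\mathcal G$ for every $i$. Because Pauli conjugation only introduces signs and $-I$ lies in both $\mathcal N(\mathcal S)$ and $\mathcal G$, both sets and hence their set difference are invariant under conjugation by any $g_i\in\mathcal P_n$, so this condition collapses to $F\notin\mathcal N(\mathcal S)\setminus\mathcal G$, producing the first set in~(\ref{stabformoaqeccond}).

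The main obstacle is the third diagonal sub-case: one must argue that a nontrivial logical factor $l\notin\langle iI\rangle$ cannot compress to a scalar on the $B$ factor. This is where the minimality of $\mathcal L_0$ is essential --- together with the fact that, by the construction underlying Lemma~\ref{reducinglemma}, $\mathrm{Alg}(\mathcal L)|_C$ exhausts $I_A\otimes M_{2^k}$ --- so that the representation $l\mapsto l_B$ has kernel exactly $\langle iI\rangle$. The rest of the proof is a bookkeeping exercise around the block decomposition of $Q F Q$ and the algebra/commutant duality recorded in Section~2.
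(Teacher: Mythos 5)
Your proposal is correct and follows essentially the same route as the paper: the same algebra $\mathcal A=\bigoplus_i(I_{A_i}\otimes\mathcal B(B_i))$ with unit projection $Q=\sum_i g_iPg_i^{-1}$, the same block decomposition of $QE_k^\dagger E_lQ$ via Eq.~(\ref{oaqeceqn}), the same anticommutation argument killing the off-diagonal blocks (and unitarity forcing $P h P\neq 0$ for $h\in\mathcal N(\mathcal S)$), and the same conjugation-invariance of $\mathcal N(\mathcal S)\setminus\mathcal G$ to collapse the diagonal conditions to a single one. The only divergence is that where you unpack the diagonal-block equivalence by writing $h=g\cdot l$ and arguing $l_B$ is non-scalar (the cleanest justification being $\mathrm{Tr}(lP)=0$ for any Pauli $l\notin\langle\mathcal S,iI\rangle$, since $P$ is an average of stabilizer elements), the paper simply cites the OQEC/Poulin stabilizer theorem as the $|\mathcal T_0|=1$ special case of Theorem~\ref{found}.
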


\begin{proof}
First note that for any $g\in \mathcal P_n$, we have equality of the following operator sets: 
\[
\mathcal N(\mathcal S)\setminus \mathcal G = g  \Big(  \mathcal N(\mathcal S)\setminus \mathcal G \Big) g^{-1},
\]
which follows from basic group properties, the anti-commutation relations, and the fact that $\mathcal G$ (and $\mathcal N(\mathcal S)$) includes the scalar operators. 
Next let us establish some notation. Recall that $P$ is the projection onto $C$, and for each $i$, let $P_i = g_i P g_i^{-1}$. This is the projection onto the subspace $g_i C:= A_i \otimes B_i$, which has subsystem tensor structure $A_i, B_i$ induced by that of $C = A_1 \otimes B_1$ and the unitary action of $g_i$. So $g_i (\ket{a}\ket{b})$ will define an orthonormal basis for the new subspace from a basis $\ket{a}\ket{b}$ for $C$, which gives a corresponding identification of operators in $g_i (\mathcal B(A)\otimes \mathcal B(B)) g_i^{-1}$ with operators in $\mathcal B(A_i)\otimes \mathcal B(B_i)$; in particular, for any $X_A\in\mathcal B(A)$ this maps $X_A \otimes I_B$ to $X_{A_i}\otimes I_{B_i}$ for some $X_{A_i}\in \mathcal B(A_i)$.

By Lemma~\ref{cosetlemma}, the projections $P_i$ project onto mutually orthogonal subspaces and hence we define the projection $Q= \sum_i P_i$ to be the (orthogonal direct) sum of the $P_i$, and with $P_1 = P$. The algebra in the background to be corrected and defined by the code is $\mathcal A = \oplus_i (I_{A_i} \otimes \mathcal B(B_i) )$, which has $Q$ as its unit projection, and the error correction conditions we make use of are those of Eq.~(\ref{oaqeceqn}).

Throughout the proof we will let $E:=E_k^\dagger E_l$ for a fixed pair $k,l$. We shall first prove the `if' direction of the result. If $E$ is not in any of the sets in Eq.~(\ref{stabformoaqeccond}), then in particular for each $i$, we have $E \notin \mathcal N(\mathcal S) \setminus \mathcal G = g_i ( \mathcal N(\mathcal S) \setminus \mathcal G ) g_i^{-1}$ and so $g_i^{-1} E g_i \notin \mathcal N(\mathcal S) \setminus \mathcal G$. Thus by the OQEC special case of the theorem above (or the Knill-Laflamme theorem when $\mathcal G = \emptyset$), we have for some operator $X_A \in \mathcal B(A)$,
\[
P ( g_i^{-1} E g_i ) P   =    X_A \otimes I_B  ,
\]
and hence for some operator $X_{A_i}\in \mathcal B(A_i)$,
\[
P_i E P_i = g_i ( P g_i^{-1} E g_i P ) g_i^{-1} 
=    g_i (   X_A \otimes I_B   ) g_i^{-1}  
=  X_{A_i} \otimes I_{B_i} ,
\]
with the first and last equalities following from the definition of $P_i$ and the induced tensor structure on $g_i C$ from the decomposition $C=A \otimes B$ and the action of $g_i$ discussed above.

For the off-diagonal blocks, choose $i\neq j$ and assume $E\notin g_i \mathcal N(\mathcal S) g_j^{-1}$. Then we have $g_i^{-1} E g_j \notin \mathcal N(\mathcal S)$, and we claim that $P   g_i^{-1} E g_j P=0$ through what has become a standard stabilizer formalism type argument. Indeed, in general if $F\notin \mathcal N(\mathcal S) = \mathcal Z(\mathcal S)$, then there is some $g\in \mathcal S$ and $z\in \mathbb C$ with $|z|=1$ and $z\neq 1$ such that $F g = z g F$. (As above, necessarily $z = -1$ here, but we keep the argument general as we will extend the result to qudits below.) Also, as $g\in \mathcal S$ and from the construction of $C$, we have $P = gP = Pg$ (the latter following from spectral theory since $P$ is a polynomial in the elements of the abelian subgroup $\mathcal S$). Hence,
\[
PFP = PF gP = P ( z gF) P = z (PgFP) = z PFP,
\]
and so $PFP =0$. Thus we have, for all $i \neq j$, 
\[
P_i E P_j = g_i (P g_i^{-1} E g_j P ) g_j^{-1} = 0.
\]

It follows then that $QEQ = \sum_{i,j} P_i E P_j = \sum_i X_{A_i} \otimes I_{B_i}$, and hence each of the operators $Q E_k^\dagger E_l Q$ satisfy the form given in Eq.~(\ref{oaqeceqn}). Thus, $C$ is correctable for the set of error operators $\{E_k\}$.

Conversely, for the `only if' direction of the proof, suppose $E$ satisfies $Q E Q  = \sum_i X_{A_i} \otimes I_{B_i}$ for some operators $X_{A_i}\in \mathcal B(A_i)$. Then for a fixed $i$, using this form for $QEQ$ and that $P_i Q = P_i = QP_i$, we have
\[
X_{A_i} \otimes I_{B_i} = P_i QEQ P_i = P_i E P_i = g_i (P g_i^{-1} E g_i P ) g_i^{-1} .
\]
Hence for some $X_{A_1} \in \mathcal B(A_1)$, we have 
\[
P g_i^{-1} E g_i P   = g_i^{-1} (X_{A_i} \otimes I_{B_i}) g_i =  X_{A_1} \otimes I_{B_1} .
\]
It follows from OQEC (and QEC when $\mathcal G = \emptyset$) that $g_i^{-1} E g_i \notin \mathcal N(\mathcal S) \setminus \mathcal G$, and so $E  \notin g_i (\mathcal N(\mathcal S) \setminus \mathcal G) g_i^{-1} = \mathcal N(\mathcal S) \setminus \mathcal G$.

Finally, fix a pair $i\neq j$, and observe from the OAQEC correctable condition in Eq.~(\ref{oaqeceqn}) that the $i,j$ off-diagonal block in the block diagonal decomposition determined by $Q$ must be zero; that is, $P_i E P_j = 0$. Thus we have,
\[
g_i P g_i^{-1} E g_j P g_j^{-1} = P_i E P_j = 0,
\]
and since $g_i, g_j$ is unitary, in fact we have $P g_i^{-1} E g_j P = 0$. We want to conclude that $g_i^{-1} E g_j \notin \mathcal N(\mathcal S)$. Suppose instead we had $F:= g_i^{-1} E g_j \in \mathcal N(\mathcal S) = \mathcal Z(\mathcal S)$. Then $F P = P F$, from spectral theory and the construction of $P$, and so $C$, the range subspace of $P$, is a reducing subspace for $F$. Hence, $P^\perp F P = P^\perp P F = 0$, and so $FP = P^\perp FP + PFP = 0$.  But $F$ is a unitary operator, and so $F$ restricted to $C$ must be a norm-preserving map. This contradicts the fact that $FP=0$, and thus we must have $g_i^{-1} E g_j \notin \mathcal N(\mathcal S)$ as required, and this completes the proof.
\end{proof}

\begin{remark} 
Conceptually, not belonging to the first set of the theorem statement ensures the individual codes (which are OQEC subsystem codes when the subspace has a tensor decomposition) are correctable, and with the needed orthogonality for the multiple codes given by the choice of coset representatives. Not belonging to the operator sets in the second union encapsulates the joint hybrid classical-quantum correctable code conditions in the OAQEC framework. A larger subset of coset representatives corresponds to more sectors and a larger hybrid code. In particular, more sectors means generally larger operator sets in the theorem statements, which in turn makes it more difficult for errors to not belong to the sets, and hence smaller sets of correctable errors. These notions will be explored more in the examples below. 
\end{remark}

Regarding correctable sets of errors for the class of codes discussed in Section~\ref{motivating_section}, Theorem~\ref{errorcorrecthm} gives a full characterization of the possible correctable errors for any given coset subset $\mathcal T_0$. As a simple example, consider the case with the two operators $\mathcal T_0 = \{ I, X_1\}$ from the transversal $\mathcal T$ above. Here there are two operator sets that the error operator products $E_k^\dagger E_l$ cannot belong to: (i) $\mathcal N(\mathcal S) \setminus \mathcal G$; and (ii) $X_1 \mathcal N(\mathcal S) = \mathcal N(\mathcal S) X_1$. From the normalizer and gauge structures above, the first set consists of all elements in the normalizer with scalar multiples of the identity on qubits $s+1$ through to $r+s$. This set encapsulates the (quantum) error correction conditions for the two quantum codes defined by this code. The second set is simply all elements of the normalizer multiplied by $X_1$, and it corresponds to the cross terms that govern whether the code is hybrid correctable. 

Further, an example of a set of correctable errors for this code, one could take a subset of $\{ I, X_2, \ldots , X_s \}$. Any pairwise products of these operators do not belong to the two sets above and hence are correctable. (In fact, in general, any set of coset representatives not used to define the hybrid code will be correctable errors.) Sets of errors are not correctable by the theorem if any product of two of them belongs to either of the two sets. Thus, any error operator product of the form $X_1 N$, with $N\in \mathcal N(\mathcal S)$, would disrupt any hybrid correction for the error model, whereas any product belonging to the first set would prevent the individual quantum codes from being corrected.

\section{Further Examples and Applications}

\subsection{Hybrid Subspace Codes}

When the gauge group is abelian ($\mathcal G_0 = \emptyset$), the codes constructed above have no subsystem structure and are subspaces. 
Further, when additionally the coset representative set is non-trivial ($\{ I \} \subsetneq \mathcal T_0$), the codes generated by the formalism are `hybrid subspace codes’. 
From the OAQEC perspective, hybrid subspace codes are those associated with algebras $\mathcal A$ that are unitarily equivalent to a direct sum of full matrix algebras; i.e., of the form $\mathcal A \cong \oplus_{k=1}^M M_{n_k}$ for some positive integers $n_k$ (and $|\mathcal T_0|=M$ in our notation above). These are precisely the algebras with an abelian commutant, $\mathcal A^\prime \cong \oplus_{k=1}^M \mathbb{C} I_{n_k}$. Each summand thus can be used to encode quantum information (when $n_k >1$) as a traditional quantum (subspace) code, and overall the collection of codes defined by $\mathcal A$ make up a hybrid subspace code that can be corrected for error sets given by Theorem~\ref{errorcorrecthm}.   

The testable conditions of Eq.~(\ref{oaqeceqn}) take on a particularly transparent form in this case. As before, let $Q = \sum_i P_i$ be the unit projection of $\mathcal A$, with $P_i$ the projection onto the $i$th matrix block of $\mathcal A$. Then the code is correctable for a set of error operators $\{ E_k \}$  if and only if there are complex scalars $\lambda_{kl}^{(i)}$ such that for all $k, l$, 
\begin{equation}\label{subspacecond}
Q E_k^\dagger E_l Q = \sum_i \lambda_{kl}^{(i)} P_i . 
\end{equation}
These conditions can be cast into vector state form (as discussed in \cite{majidy2018unification}) as follows: Given $1 \leq i \leq M$, choose orthonormal states $\{ \ket{\psi_{i,j}}\}_{j=1}^{n_i}$ such that $P_i = \sum_j \kb{\psi_{ij}}{\psi_{ij}}$. We then have 
\[
\bra{\psi_{i_1 j_1}} E_k^\dagger E_l \ket{\psi_{i_2 j_2}} = \bra{\psi_{i_1 j_1}} Q E_k^\dagger E_l Q \ket{\psi_{i_2 j_2}}  
= \sum_i \lambda_{kl}^{(i)} \bra{\psi_{i_1 j_1}} P_i \ket{\psi_{i_2 j_2}} 
= \lambda_{kl}^{(i_1)} \delta_{i_1 i_2} \delta_{j_1 j_2} . 
\]
One can reverse this argument to observe that Eq.~(\ref{subspacecond}) is equivalent to the orthogonality conditions 
\begin{equation}\label{subspaceveccond}
\bra{\psi_{i_1 j_1}} E_k^\dagger E_l \ket{\psi_{i_2 j_2}} 
= \lambda_{kl}^{(i_1)} \delta_{i_1 i_2} \delta_{j_1 j_2} ,
\end{equation}
for any choice of orthonormal basis states $\ket{\psi_{ij}}$ for the range subspaces of the $P_i$. 

Recently, a distinguished special case of hybrid subspace codes was considered in \cite{grassl2017codes}.
In OAQEC language (with notation used in \cite{grassl2017codes}), the algebras of focus there are of the form $\mathcal A \cong \oplus_{\nu =1}^M M_K$; i.e., the direct sum of $M$ copies of $K\times K$ complex matrices. The full code subspace is thus an orthogonal direct sum $\oplus_{\nu =1}^M C^{(\nu)}$ of $K$-dimensional subspaces, and there are unitary `translation' operators that connect the individual code subspaces, $C^{(\nu)} = T^{(\nu)} C^{(1)}$.  
Interestingly, the orthogonality conditions of Eq.~(\ref{subspaceveccond}) were independently discovered in \cite{grassl2017codes} for this subclass of hybrid subspace codes (see the recent work \cite{majidy2018unification} for further discussions and results on connections between the two perspectives). 
The codes constructed in \cite{grassl2017codes} are captured by the stabilizer formalism presented above; in particular, they are special cases of hybrid codes with (in our notation above) trivial gauge generator group ($\mathcal G_0 = \emptyset$) and non-trivial coset representatives ($\{ I \} \subsetneq \mathcal T_0$). As an illustration, let us consider one of the codes presented there. 

The following describes a single qubit hybrid code on 7-qubit space presented as one of the examples in \cite{grassl2017codes}. The first six rows are the stabilizer subgroup generators, the next two are logical operators on the base code space $C^{(1)}$, and the final row is a translation operator which we shall denote by $T$. 
\[
\begin{array}{c|ccccccccc}
\hline \hline 
S_1 & X & I & I & Z & Y & Y & Z \\ 
S_2 & Z & I & I & I & I & I & X \\ 
S_3 & I & X & I & X & Z & I & I \\ 
S_4 & I & Z & I & Z & I & X & X \\
S_5 & I & I & X & X & I & Z & I \\ 
S_6 & I & I & Z & Z & X & I & X \\ 
\hline 
\overline{X} & I & I & I & X & Z & Z & X \\  
\overline{Z} & I & I & I & Z & X & X & I \\ 
\hline 
T & I & I & I & I & X & Y & Y \\ 
\hline \hline
\end{array}
\]

In our notation, the parameters for this example are $n=7$, $s=6$, and $k=1$ (with $r= 0$ as there is no subsystem structure here). The choice of coset representatives given by the table is $\mathcal T_0 = \{I, T\}$; and indeed, one can see that $T$ does not commute with $S_k$ for $k=2,3,5,6$, so $T \mathcal N(\mathcal S) \neq \mathcal N(\mathcal S)$ is a different coset than that defined by the identity operator.  Observe that there are $2^s = 64$ cosets for $\mathcal N(\mathcal S)$ in this case, and so there are several other potential coset representative subset choices. As a simple example, we could choose $X_1 = XIIIIII$, which does not commute with $S_2$, and so defines a different coset than the identity operator and  that defined by $T$ (as $T X_1 \notin \mathcal N(\mathcal S)$ since it does not commute with $S_3$); that is, $ \mathcal N(\mathcal S) \neq X_1  \mathcal N(\mathcal S) \neq T  \mathcal N(\mathcal S)$.

Additionally, Theorem~\ref{errorcorrecthm} gives a characterization of sets of possible Pauli errors that can be corrected by such codes. Correctable errors can be viewed through this lens; for instance, while the operator $X_1$ can act as a new coset representative for this hybrid code, it is also a correctable error for the code, as can be seen through an application of Theorem~\ref{errorcorrecthm} and the group theoretic conditions displayed there. The relevant operator sets given by the theorem are (recalling that there are no noncommutative gauge operators here): (i) $\mathcal N(\mathcal S) \setminus \langle \mathcal S, iI \rangle $; and, (ii) $T \mathcal N(\mathcal S) = \mathcal N(\mathcal S) T$. The fact that $X_1$ does not belong to either of these sets shows the set $\{ I, X_1 \}$ is a correctable set of errors for the code.    

Note that Theorem~\ref{errorcorrecthm} also tells us what sets of errors are not correctable for this hybrid code. As another simple example, consider the error set $\{ I, T\}$.  Observe that $TI = T\in T  \mathcal N(\mathcal S)$, and so the error set fails the hybrid correctable condition. (In fact, as noted in the previous example, the same is true for any error set consisting of transversal operators.) 
This error set is interesting in the sense that, while it does not satisfy the hybrid correctable condition, each of the individual quantum subspace codes are correctable for the error set, which follows since $T\notin  \mathcal N(\mathcal S) \supseteq \mathcal N(\mathcal S) \setminus \langle \mathcal S, iI \rangle$.

\subsection{Hybrid Bacon-Shor Code}

The (two-dimensional) Bacon-Shor code~\cite{Bacon2006Operator,Li2019Compass} is a subsystem code defined on an $\ell \times \ell$ grid of qubits, with gauge group $\mathcal G$ generated by $\mathcal G_0$ and $iI$, where
\begin{equation}
    \mathcal G_0 = 
    \{
    X_{(i,j)} X_{(i,j+1)} :
    1 \leq i \leq \ell, 1 \leq j < \ell
    \}
    \cup
    \{
    Z_{(i,j)} Z_{(i+1,j)} :
    1 \leq i < \ell, 1 \leq j \leq \ell
    \}.
\end{equation}
We use the notation $X_{(i,j)}$ to denote a Pauli $X$ operator acting on the qubit at coordinate $(i,j)$ (and analogously for Pauli $Z$ operators).
See Fig.~\ref{fig:bacon-shor-lattice} for a visual depiction of the operators in $\mathcal G_0$.
The stabilizer group $\mathcal S$ is generated by the set  
\begin{equation}
    \big\{
    X_{(\ast,j)} X_{(\ast,j+1)}, Z_{(i,\ast)} Z_{(i+1,\ast)} :
    1 \leq i,j < \ell - 1
    \big\} ,
\label{eq:bacon-shor-stab}
\end{equation}
where $X_{(\ast,j)} = X_{(1,j)} X_{(2,j)} \ldots X_{(\ell, j)}$.
The logical group $\mathcal L$ is generated by $\mathcal L_0$ and $iI$, where
\begin{equation}
    \mathcal L_0 = 
    \{
    X_{(\ast,1)}, Z_{(1, \ast)}
    \},
\end{equation}
and so the code distance is $d = \ell$.

\begin{figure}
    \centering
    (a)
    \includegraphics[width=.3\linewidth]{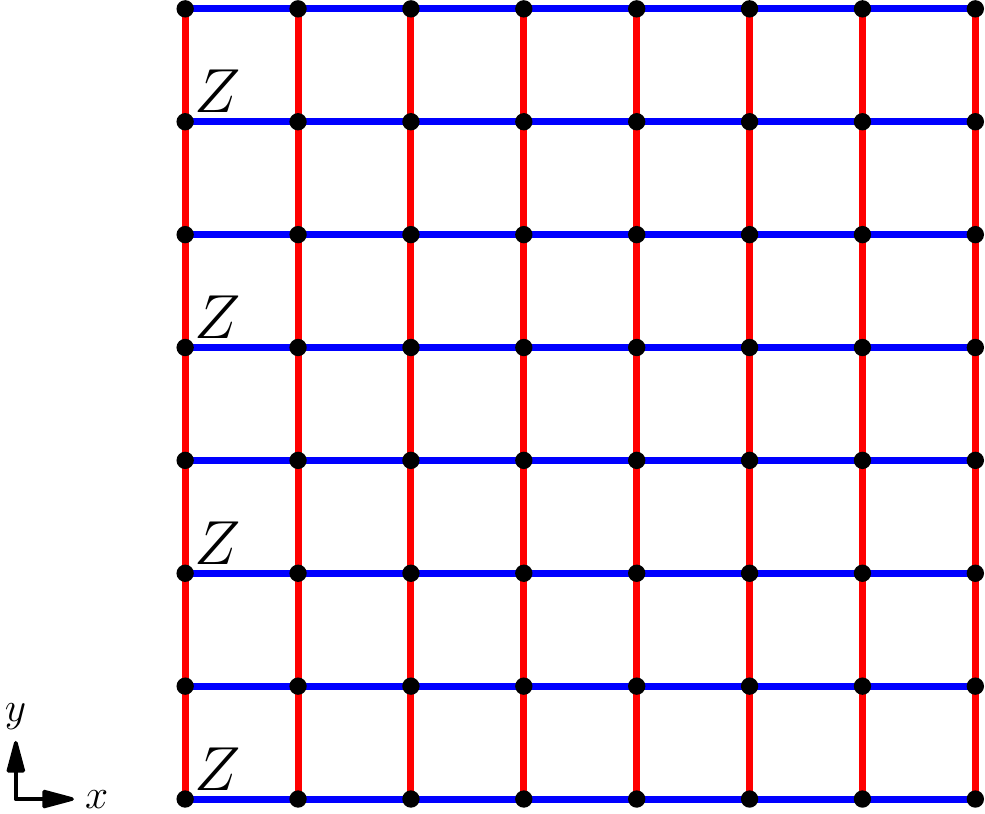}
    \quad\quad
    (b)
    \includegraphics[width=.3\linewidth]{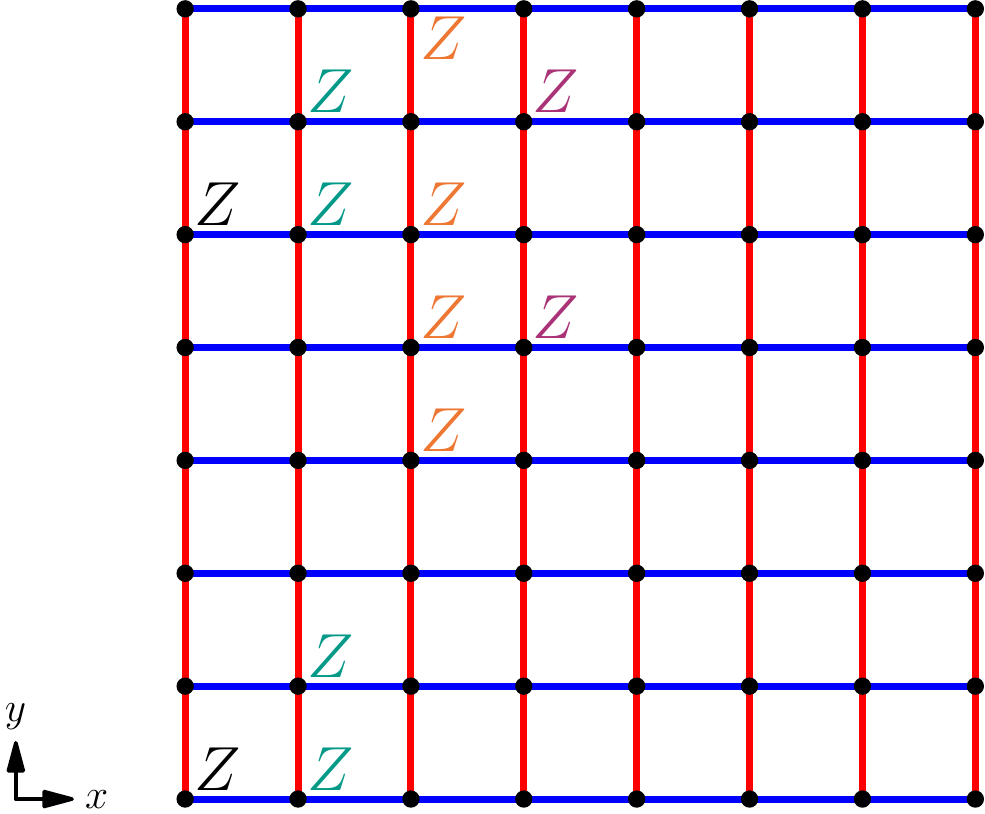}
    \caption{Hybrid Bacon-Shor code.
    Qubits are indicated by black circles, $XX$ gauge generators by red lines and $ZZ$ gauge generators by blue lines.
    (a) Example coset representative whose error syndrome is $11\dots1$.
    (b) Example coset representatives whose error syndromes generate the Hamming code (each colour denotes a different representative).
    }
    \label{fig:bacon-shor-lattice}
\end{figure}

Let us consider some example choices of subsets of coset representatives. 
First, let $\mathcal T_0$ be generated by $\prod_{i=1}^{\lfloor\ell / 2\rfloor} X_{(2i,1)}$ and $\prod_{j=1}^{\lfloor\ell / 2\rfloor} Z_{(1,2j-1)}$ (see Fig.~\ref{fig:bacon-shor-lattice}a for the $\ell=8$ case).
With this choice of $\mathcal T_0$ we get a 2-bit hybrid Bacon-Shor code.
We can equivalently index our subset of coset representatives by their error syndromes. 
Consider the $X$-type stabilizer generators given in Eq.~\eqref{eq:bacon-shor-stab}. 
We write the error syndrome as a binary string, where the $j$'th entry is 0 if the stabilizer $X_{(\ast, j)} X_{(\ast, j+1)}$ is satisfied, and 1 if it is unsatisfied.
Then the error syndromes corresponding to the coset representatives $I$ and $\prod_{j=1}^{\lfloor\ell / 2\rfloor} Z_{(1,2j)}$ are respectively $00\ldots0$ and $11\ldots1$; i.e.\ the codewords of the $(\ell$-1)-bit repetition code.
In general, let $C_c$ be an $(\ell$-1)-bit linear code with basis $\{ v_i \}$. 
We are free to choose $Z$-type generators $g_i \in \mathcal T_0$ such that $\sigma (g_i) = v_i$, where $\sigma (g_i)$ denotes the (binary) error syndrome.
Fig.~\ref{fig:bacon-shor-lattice}b illustrates the case when $C_c$ is the $[7,4,3]$ Hamming code. 
If we make the same choice for the $X$-type generators of $\mathcal T_0$ we get an 8-bit hybrid Bacon-Shor code.
The following theorems characterize the distance of our hybrid Bacon-Shor codes. 

\begin{theorem}
Let $C = C(\mathcal S, \mathcal G_0, \mathcal L_0, \mathcal T_0 = \{I\})$ be an $[\![n,k,d]\!]$ stabilizer subsystem code.
Fix a generating set $\{ S_j : 1 \leq j \leq s \}$ for $\mathcal S$.
Suppose that none of the single-qubit error operators anti-commute with more than $m$ of the $S_j$.
Then for any $[s,k_c,d_c]$ linear code $C_c$, there exists a hybrid subsystem code $C' = C(\mathcal S, \mathcal G_0, \mathcal L_0, \mathcal T_0')$ encoding $k$ logical qubits into $n$ qubits with $|\mathcal T_0'| = 2^{k_c}$ and distance $d' \geq \min(d, \lceil d_c/m \rceil)$.
\label{distancethm}
\end{theorem}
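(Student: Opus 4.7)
The plan is to build $\mathcal T_0'$ so that its associated syndromes are exactly the codewords of $C_c$, and then to convert the error-correction criterion of Theorem~\ref{errorcorrecthm} into a weight bound via the hypothesis on single-qubit syndromes. I will work throughout with the syndrome map $\sigma : \mathcal P_n \to \mathbb{F}_2^s$ defined by $\sigma(g)_j = 0$ if $g$ commutes with $S_j$ and $\sigma(g)_j = 1$ otherwise. Standard commutation relations in the Pauli group show that $\sigma$ is a group homomorphism with kernel $\mathcal Z(\mathcal S) = \mathcal N(\mathcal S)$, and independence of the $S_j$ implies surjectivity; in particular, $\sigma$ descends to an isomorphism $\mathcal P_n / \mathcal N(\mathcal S) \cong \mathbb{F}_2^s$, so each coset is identified with its syndrome.

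Given a basis $\{v_1, \ldots, v_{k_c}\}$ of $C_c$, I would pick Pauli operators $h_1, \ldots, h_{k_c}$ with $\sigma(h_i) = v_i$ (possible by surjectivity) and define $\mathcal T_0' = \{ \prod_{i=1}^{k_c} h_i^{a_i} : a_i \in \{0,1\} \}$. This set contains $I$ (the empty product), has cardinality $2^{k_c}$ since the $v_i$ are linearly independent over $\mathbb{F}_2$, and satisfies $\{ \sigma(T) : T \in \mathcal T_0' \} = C_c$. For any two distinct $T_1, T_2 \in \mathcal T_0'$, the element $\sigma(T_1 T_2^{-1}) = \sigma(T_1) + \sigma(T_2)$ is a nonzero codeword of $C_c$ and therefore has Hamming weight at least $d_c$.

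To lower-bound the distance, I would use Theorem~\ref{errorcorrecthm} to identify $d'$ as the minimum weight of a Pauli operator lying in the forbidden set $(\mathcal N(\mathcal S) \setminus \mathcal G) \cup \bigcup_{i \neq j} T_i \mathcal N(\mathcal S) T_j^{-1}$. An element $E$ of the first piece is a nontrivial logical operator of the underlying subsystem code, so $\mathrm{wt}(E) \geq d$ by hypothesis. For the second piece, $E \in T_i \mathcal N(\mathcal S) T_j^{-1}$ is equivalent to $\sigma(E) = \sigma(T_i) + \sigma(T_j)$, which for $i \neq j$ is a nonzero codeword of $C_c$. Writing $E$ as a product of at most $\mathrm{wt}(E)$ single-qubit Paulis and using additivity of $\sigma$ together with the hypothesis that each such factor has syndrome weight at most $m$, I would obtain $|\sigma(E)| \leq m \cdot \mathrm{wt}(E)$. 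Combining with $|\sigma(E)| \geq d_c$ yields $\mathrm{wt}(E) \geq \lceil d_c/m \rceil$, so overall $d' \geq \min(d, \lceil d_c/m \rceil)$.

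The main obstacle, and indeed the heart of the argument, is the syndrome-weight inequality $|\sigma(E)| \leq m \cdot \mathrm{wt}(E)$: it is what converts the classical distance $d_c$, which a priori lives in syndrome space, into a lower bound on the quantum weight of inter-sector errors. The construction of $\mathcal T_0'$ and the identification of $d'$ with a minimum weight in the forbidden set are essentially bookkeeping built on the surjectivity of $\sigma$ and on Theorem~\ref{errorcorrecthm}.
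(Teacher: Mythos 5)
Your proposal is correct and follows essentially the same route as the paper's own proof: construct $\mathcal T_0'$ so that its syndromes realize the codewords of $C_c$, bound the first forbidden set by $d$, and convert the classical distance $d_c$ into a weight bound via the observation that each single-qubit Pauli flips at most $m$ syndrome bits. Your write-up merely makes explicit the syndrome homomorphism $\sigma$ and the inequality $|\sigma(E)| \leq m \cdot \mathrm{wt}(E)$, which the paper states more tersely.
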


\begin{proof}
Let $\{ v_i \}$ be a basis for the codewords of $C_c$. 
For each $v_i$, we construct a corresponding coset representative $h_i$ such that $\sigma(h_i)=v_i$, giving $\mathcal T_0' = \langle h_i : 1 \leq i \leq k_c \rangle$.
This can be done for example using Gaussian elimination.
Now we apply Theorem~\ref{errorcorrecthm} to $C' = C(\mathcal S, \mathcal G_0, \mathcal L_0, \mathcal T_0')$.
First note that $\mathcal N(\mathcal S)\setminus \mathcal G$ contains operators of weight at least $d$.
To bound the weight of operators in $\bigcup_{i \neq j} g_i \mathcal N(\mathcal S) g_j^{-1}$, we observe that there is a bijection between the $g_i \in \mathcal T'_0$ and the codewords of $C_c$.
As operators in $\mathcal N(\mathcal S)$ have trivial syndrome, every operator of the form $g_i \mathcal N(\mathcal S) g_j^{-1}$ has syndrome equal to a codeword $u = u_i + u_j$ of $C_c$, where $u_i$ and $u_j$ are the codewords of $C_c$ corresponding to $g_i$ and $g_j$, respectively.
Furthermore, $u \neq 0$ as $i \neq j$.
Because any single-qubit error anti-commutes with at most $m$ stabilizer generators, any operator with syndrome equal to $u$ must have weight at least $\lceil d_c/m \rceil$.
\end{proof}

A CSS code~\cite{PhysRevA.54.1098,steane1996multiple} is a stabilizer code whose stabilizer group can be generated by two sets $S_X$ and $S_Z$ consisting of exclusively of $X$-type and $Z$-type operators, respectively.
In this case, we have another option for constructing hybrid codes.

\begin{corollary}
\label{distancecorr}
Let $C = C(\mathcal S, \mathcal G_0, \mathcal L_0, \mathcal T_0 = \{I\})$ be an $[\![n,k,d]\!]$ CSS subsystem code.
Fix $X$-type and $Z$-type generating sets, $S_X$ and $S_Z$, for $\mathcal S$, where $|S_X| = s_X$ and $|S_Z| = s_Z$.
Suppose that none of the single-qubit error operators anti-commute with more than $m_X$ ($m_Z$) operators in $S_X$ ($S_Z$).
Then for any pair of linear codes $C_X$ and $C_Z$ with parameters $[s_X, k_X, d_X]$ and $[s_Z, k_Z, d_Z]$, there exists a hybrid subsystem code $C' = C(\mathcal S, \mathcal G_0, \mathcal L_0, \mathcal T_0')$ encoding $k$ logical qubits into $n$ qubits with $|\mathcal T'_0| = 2^{k_X + k_Z}$ and distance $d' \geq \min(d, d_{XZ})$, where $d_{XZ} = \min(\lceil d_X/m_X \rceil, \lceil d_Z/m_Z \rceil)$.
\end{corollary}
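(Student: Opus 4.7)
The plan is to leverage the CSS structure so that the $X$-type and $Z$-type halves decouple, allowing us to apply the reasoning of Theorem~\ref{distancethm} twice in parallel rather than once monolithically. Let $\sigma_X(g)\in\mathbb{F}_2^{s_X}$ and $\sigma_Z(g)\in\mathbb{F}_2^{s_Z}$ denote the syndromes of $g\in\mathcal P_n$ against $S_X$ and $S_Z$ respectively. Because all stabilizer generators are pure $X$-type or pure $Z$-type, a $Z$-type Pauli $g$ has $\sigma_Z(g)=0$ and a syndrome $\sigma_X(g)$ completely determined by its commutation relations with $S_X$; symmetrically for $X$-type Paulis.

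First I would fix bases $\{v^X_i\}_{i=1}^{k_X}$ of $C_X$ and $\{v^Z_j\}_{j=1}^{k_Z}$ of $C_Z$, and then invoke the constructive step of Theorem~\ref{distancethm} separately on each side. Using Gaussian elimination I can select $Z$-type coset representatives $h^Z_i$ with $\sigma_X(h^Z_i)=v^X_i$ (and automatically $\sigma_Z(h^Z_i)=0$), and $X$-type coset representatives $h^X_j$ with $\sigma_Z(h^X_j)=v^Z_j$ (and $\sigma_X(h^X_j)=0$). Setting $\mathcal T_0' = \langle h^Z_i,\,h^X_j \,:\, 1\le i\le k_X,\,1\le j\le k_Z\rangle$, the independence of the $X$-syndrome and $Z$-syndrome images shows that the induced map $\mathcal T_0'\to C_X\oplus C_Z$ by $g\mapsto(\sigma_X(g),\sigma_Z(g))$ is injective modulo $\mathcal N(\mathcal S)$, so $|\mathcal T_0'|=2^{k_X+k_Z}$ as required.

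Next I would invoke Theorem~\ref{errorcorrecthm} and bound the weights of operators in the two forbidden sets. The weight of every element of $\mathcal N(\mathcal S)\setminus\mathcal G$ is at least $d$ by hypothesis on the underlying subsystem code. For the cross-coset set $\bigcup_{i\neq j} g_i\mathcal N(\mathcal S)g_j^{-1}$, fix a nontrivial pair and write $u=(u_X,u_Z)\in C_X\oplus C_Z$ for the (nonzero) syndrome of $g_ig_j^{-1}$; every element of $g_i\mathcal N(\mathcal S)g_j^{-1}$ carries this same syndrome. If $u_X\neq 0$ then its Hamming weight is at least $d_X$, and since each individual qubit in any Pauli operator can flip at most $m_X$ entries of the $S_X$-syndrome, the operator must be supported on at least $\lceil d_X/m_X\rceil$ qubits; symmetrically if $u_Z\neq 0$ we obtain weight at least $\lceil d_Z/m_Z\rceil$. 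At least one of $u_X,u_Z$ is nonzero, yielding the claimed bound $d'\ge\min(d,d_{XZ})$.

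The only subtle point I expect is verifying that the two-sided construction really does produce $2^{k_X+k_Z}$ distinct normalizer cosets and that nontriviality of $g_ig_j^{-1}$ in $\mathcal P_n/\mathcal N(\mathcal S)$ forces nontriviality of $(\sigma_X,\sigma_Z)$. This is immediate because an element of $\mathcal P_n$ lies in $\mathcal N(\mathcal S)=\mathcal Z(\mathcal S)$ exactly when it commutes with every generator of $S_X\cup S_Z$, i.e.\ precisely when both syndromes vanish; hence the $X$-side and $Z$-side constructions do not interfere and the counting/weight arguments above go through cleanly.
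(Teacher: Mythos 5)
Your proposal is correct and follows essentially the same route as the paper's proof: construct generators of $\mathcal T_0'$ whose syndromes are $(v_i,0)$ and $(0,w_j)$ for bases of $C_X$ and $C_Z$, then apply Theorem~\ref{errorcorrecthm} and bound the weight of cross-coset elements via the nonzero component of the syndrome pair using the $m_X$, $m_Z$ hypothesis. Your extra remarks on injectivity of the syndrome map modulo $\mathcal N(\mathcal S)$ and on choosing pure-type representatives are just explicit versions of steps the paper leaves implicit.
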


\begin{proof}
Let $\{ v_i \}$ and $\{ w_j \}$ be bases for the codewords of $C_X$ and $C_Z$, respectively.
We construct coset representatives $h_i$ and $f_j$, such that $\sigma(h_i) = (v_i, 0)$ and $\sigma(f_j) = (0, w_j)$, giving $\mathcal T'_0 = \langle h_i, f_j : 1 \leq i \leq k_X, 1 \leq j \leq k_Z \rangle$.
Now we apply Theorem~\ref{errorcorrecthm} to $C' = C(\mathcal S, \mathcal G_0, \mathcal L_0, \mathcal T_0')$.
The only difference with the proof of Theorem~\ref{distancethm} is that every operator in $g_i \mathcal N(\mathcal S) g_j^{-1}$ has syndrome equal to $(u,t)$, where $u$ and $t$ are codewords of $C_X$ and $C_Z$, respectively.
In this case we can have either $u=0$ or $t=0$, but not both.
Because any single-qubit error anti-commutes with at most $m_X$ $X$-type stabilizer generators and $m_Z$ $Z$-type stabilizer generators, any operator with syndrome equal to $(u,t)$ must have weight at least $\min(\lceil d_X/m_X \rceil, \lceil d_Z/m_Z \rceil)$.
\end{proof}

We are now equipped to discuss the distance of our hybrid Bacon-Shor code examples.
We denote the parameters of a hybrid code by $[\![n,k:m,d]\!]$, where $n$ is the number of physical qubits, $k$ is the number of encoded qubits, $m$ is the number of encoded bits, and $d$ is the code distance\footnote{We note that hybrid codes of this form are a subset of our general construction, corresponding to the case where $|\mathcal T_0| = 2^m$.}.
For Bacon-Shor codes with the generating set of the stabilizer group given in \eqref{eq:bacon-shor-stab}, we have $s_X = s_Z = \ell - 1$ and $m_X = m_Z = 2$.
Applying Corollary~\ref{distancecorr}, we find that any hybrid code built from an $[\![\ell^2, 1, \ell]\!]$ Bacon-Shor code using our construction has distance at most $\lceil (\ell - 1)/2 \rceil$.
We can saturate this bound by choosing $C_X$ and $C_Z$ to both be the $[\ell - 1, 1, \ell - 1]$ repetition code, obtaining a hybrid Bacon-Shor code with parameters $[\![ \ell^2, 1:2, \lceil (\ell - 1)/2 \rceil ]\!]$.
This is exactly the case where $T'_0 = \langle \prod_{i=1}^{\lfloor\ell / 2\rfloor} X_{(2i,1)}, \prod_{j=1}^{\lfloor\ell / 2\rfloor} Z_{(1,2j-1)} \rangle$.
For the $\ell = 8$ case with both $C_X$ and $C_Z$ equal to the $[7,4,3]$ Hamming code, we obtain a hybrid Bacon-Shor code with parameters $[\![ 64, 1:8, 2 ]\!]$.
In~\cite{nemec2022encoding}, the authors provide an alternative construction of hybrid codes starting from Bacon-Shor codes.
They obtain hybrid codes with parameters $[\![ \ell^2, 1:(\ell-1)^2, 2 ]\!]$.
In contrast, our hybrid Bacon-Shor codes can have at most $2(\ell - 1)$ encoded bits, but they can have distance up to $\lceil (\ell - 1)/2 \rceil$.

In each of the previous examples, the distance of the hybrid Bacon-Shor code is lower than the distance of the initial Bacon-Shor code.
This need not be the case for all initial codes, e.g., consider the $[\![\ell^2,2,\ell]\!]$ toric code with the canonical stabilizer generators~\cite{Kitaev2003Anyons}.
Here we have $s = \ell^2 - 2$ and $m=4$. 
If we choose a linear code with parameters $[s, \alpha s, \beta s]$ where $\beta \geq 4 \ell / (\ell^2 - 2)$ then, by Theorem~\ref{distancethm}, we can construct a hybrid toric code with parameters $[\![ \ell^2, 2:\alpha(\ell^2-2), \ell]\!]$.

\section{Extension to Qudits}

In this section we discuss the extension of the stabilizer formalism presented above to the case of qudits; that is, what happens when one replaces the base qubit space $\mathbb{C}^2$ with $\mathbb{C}^d$ for fixed positive integer $d > 2$. We begin by recalling the basic set up for the standard qudit stabilizer formalism, as described in several other places (see for instance \cite{gottesman1999fault,gheorghiu2014standard,miller2019small}).

\subsection{The \texorpdfstring{$n$}{n}-Qudit Pauli Group}

Let $\{ \ket{0}, \ldots , \ket{d-1} \}$ be a fixed basis for $\mathbb{C}^d$, and given a fixed positive integer $n\geq 1$ consider the corresponding basis for   
$(\mathbb C^d)^{\otimes n}$ written as $\{ \ket{i_1\cdots i_n} = \ket{i_1}\otimes \ldots \otimes \ket{i_n} \, : \, 0 \leq i_j \leq d-1 , \, 1 \leq j \leq n \}$.
Further let $\omega = e^{2 \pi i / d}$ be a primitive $d$th root of unity, and define the following generalized Pauli operators: 
\[
X = \sum_{k=0}^{d-1} \kb{k+1}{k}  \quad \mathrm{and} \quad Z = \sum_{k=0}^{d-1} \omega^k \kb{k}{k}, 
\]
where in the definition of $X$ we use modulo $d$ arithmetic with $\ket{d}\equiv \ket{0}$. Our choice of generalized Pauli errors is a common one, but there are other choices (for instance error operators related to the finite field with $d$ elements are often used when $d$ is a prime power).  

Some of the relevant properties of the so-called `shift' ($X$) and `clock' ($Z$)  operators include: $X^d = I = Z^d$ and the anti-commutation relation 
\[
\quad Z X = \omega  X Z . 
\]
Note that $X$ and $Z$ are no longer self-adjoint for $d> 2$, but they are unitary with $X^{-1} = X^{d-1} = X^\dagger$ (and the same for $Z$). 
The single qudit Pauli group is the unitary subgroup of $\mathcal U(\mathbb{C}^d)$ given by
\[
\mathcal P_{d,1} = \langle \sqrt{w} I , X , Z \rangle, 
\]
where we use the choice of complex square root $\sqrt{\omega} = e^{\pi i / d}$. So the generic element of $\mathcal P_{d,1}$ can be written in the form $\omega^{a/2} X^b Z^c$ for some $a,b,c\in \mathbb{N}$. 

Observe that for $d=2$ we have $\sqrt{\omega} = i$, and so this definition agrees with the qubit case. But one may ask, why include the phase factor $\sqrt{\omega}$ as a generator, instead of $\omega$ for instance? The reason is that including it allows for many more eigenvalue-1 operators, which is crucial in the context of the stabilizer formalism. Indeed, one can show using standard linear algebra tools that for any operator $X^a Z^b$, with $a,b\in\mathbb{N}$, there is a $U\in\mathcal P_{d,1}$ that is proportional to the operator such that $U$ has 1 as an eigenvalue. 

As in the single qubit case, for arbitrary $n \geq 1$, we define the {\it $n$-qudit Pauli group} $\mathcal P_{d,n}$ to be the subgroup of $\mathcal U(N)$, with $N=d^n$, generated by $n$-tensors of the single qudit Pauli operators $X$, $Z$, and $\sqrt{\omega} I$; that is, the unitary group generated by $Z_1 = Z \otimes (I^{\otimes (n-1)})$, $Z_2 = I\otimes  Z \otimes (I^{\otimes (n-2)})$, etc. Hence it follows, again applying the anti-commutation relations, that a generic element of $\mathcal P_{d,n}$ belongs to the set: 
\[
\Big\{   (\sqrt{\omega})^c  X_{1}^{a_{1}} Z_{1}^{b_{1}} \cdots    X_{n}^{a_{n}} Z_{n}^{b_{n}} \, : \, 0\leq c \leq 2d-1, \,\, 0 \leq a_j, b_j \leq d-1   \Big\}. 
\] 
Observe that the cardinality of $\mathcal P_{d,n}$ is: $2d\times d^n \times d^n = 2 d^{2n+1}$. 

For stabilizer formalism related calculations, it is useful to know that every element of $\mathcal P_{d,n}$ that is not a multiple of the identity operator has trace equal to 0. Indeed, one can use the anti-commutation relation and cyclic property of the trace to show that $\mathrm{Tr} (X^aZ^b) \neq 0$ if and only if $X^a Z^b$ is a multiple of the identity.  
Moreover, given an abelian subgroup $\mathcal S$ of $\mathcal P_{d,n}$, there is a well-known and useful formula for the orthogonal projection $P_{\mathcal S}$ onto the stabilizer subspace defined by $\mathcal S$ given by $P_{\mathcal S} = \frac{1}{|\mathcal S|} \sum_{S\in \mathcal S} S.$

\subsection{Hybrid Qudit Stabilizer Formalism} 

The OAQEC stabilizer formalism presented above for the qubit base space, extends fully to the case of qudits, including the main error correction theorem. Here we briefly point out the main pieces, following along the presentation above. 

\begin{itemize}
\item[$\bullet$] The starting point is again an abelian subgroup $\mathcal S$ of $\mathcal P_{d,n}$, and now we require that $\mathcal S$ contains no scalar operators other than the identity operator $I$ (which is equivalent to $-I\notin \mathcal S$ in the $d=2$ case). Even though the generating operators are no longer self-adjoint, it is still the case that the normalizer and centralizer coincide; that is, $\mathcal N(\mathcal S) = \mathcal Z(\mathcal S)$. (This follows because elements of $\mathcal P_{d,n}$ either commute or commute up to a power of $\omega$, and $I$ is the only scalar operator in $\mathcal S$.) The stabilizer subspace $C = C(\mathcal S) = \mathrm{span} \{ \ket{\psi} \, : \, g \ket{\psi} = \ket{\psi} \,\, \forall g\in \mathcal S \}$ is defined in the same way. 

\item[$\bullet$] The $r$-qudit gauge group and $k$-qudit logical group are analogously defined, with $\sqrt{\omega}I$ replacing $iI$. Lemma~\ref{reducinglemma} holds, with $\mathbb{C}^d$ replacing $\mathbb{C}^2$ in the Hilbert space and subsystem decompositions. 
  
\item[$\bullet$] Further, regarding the normalizer cosets and hybrid code sectors, Lemma~\ref{cosetlemma} still holds, with the replacement of $\mathcal P_{d,n}$ and $\omega I$ in the statement (with the same basic ingredients in the proof, as noted just above). Thus, given a subset of a coset transversal for $\mathcal N(\mathcal S)$ inside $\mathcal P_{d,n}$, we will have an associated hybrid code $C = C(\mathcal S, \mathcal G_0, \mathcal L_0, \mathcal T_0)$ with code sectors as in the qubit case, and subsystem structure defined by the gauge group (when it is non-trivial), which is carried to the different code spaces by the transversal operators. 
\end{itemize}

In terms of the error correction conditions, first note that none of the OAQEC framework or results are qubit dependent, they are based on the general theory of operator algebras on Hilbert space. This, together with using analogous properties of the generalized Pauli group, allows us to generalize Theorem~\ref{errorcorrecthm}, essentially with the same proof. We state the result here for completeness.  

\begin{theorem} \label{errorcorrecthmd}
A code $C = C(\mathcal S, \mathcal G_0, \mathcal L_0, \mathcal T_0)$, with $\mathcal T_0 = \{ g_i \}_i$,  is correctable for a set of error operators $\{E_k\} \subseteq \mathcal P_{d,n}$ if and only if for all $k,l$,  
\begin{equation}\label{stabformoaqeccondd}
E_k^\dagger E_l \notin \Big( \mathcal N(\mathcal S)\setminus \mathcal G  \Big) \bigcup \Big(  \bigcup_{i\neq j} g_i \mathcal N(\mathcal S) g_j^{-1}   \Big) .
\end{equation}
\end{theorem}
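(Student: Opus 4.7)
The plan is to follow the proof of Theorem~\ref{errorcorrecthm} essentially verbatim, substituting qudit analogues for the three qubit-specific ingredients it uses. These are: (i) the codespace projector $P$ commutes with every element of $\mathcal N(\mathcal S)=\mathcal Z(\mathcal S)$; (ii) for any $F\in \mathcal P_{d,n}$ with $F\notin\mathcal Z(\mathcal S)$ there is an $S\in\mathcal S$ and a scalar $z$ with $|z|=1$ and $z\neq 1$ satisfying $FS = zSF$; and (iii) the OQEC/QEC characterization for the individual (conjugated) subsystem codes. For (i) I would replace the spectral functional calculus argument by the explicit qudit formula $P = \tfrac{1}{|\mathcal S|}\sum_{S\in\mathcal S} S$ from Section~6.1, which gives $PF = FP$ directly whenever $F$ centralizes $\mathcal S$. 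For (ii) I would invoke the fact that any two elements of $\mathcal P_{d,n}$ either commute or commute up to a power of $\omega$; since $\mathcal S$ contains only the scalar $I$, if $F\notin\mathcal Z(\mathcal S)$ there is an $S\in\mathcal S$ with $FS = \omega^k SF$ for some $k\not\equiv 0\pmod d$, and the resulting $z = \omega^k$ satisfies $z\neq 1$. Ingredient (iii) is the qudit version of the earlier OQEC/QEC stabilizer theorems.

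With those three substitutions in hand, the setup is identical: let $P_i = g_i P g_i^{-1}$ and $Q = \sum_i P_i$, so that $Q$ is the unit projection of the algebra $\mathcal A = \bigoplus_i (I_{A_i}\otimes \mathcal B(B_i))$ attached to the code, with the $P_i$ mutually orthogonal by the qudit form of Lemma~\ref{cosetlemma}. For the \emph{if} direction, fix $E = E_k^\dagger E_l$. Using that $\mathcal N(\mathcal S)\setminus\mathcal G$ is invariant under conjugation by any element of $\mathcal P_{d,n}$ (since $\mathcal G$ and $\mathcal N(\mathcal S)$ both contain $\langle\sqrt{\omega}\,I\rangle$), the hypothesis $E\notin \mathcal N(\mathcal S)\setminus\mathcal G$ transfers to $g_i^{-1}Eg_i$; ingredient (iii) then gives $P(g_i^{-1}Eg_i)P = X_A\otimes I_B$, and conjugating by $g_i$ yields $P_iEP_i = X_{A_i}\otimes I_{B_i}$. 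For $i\neq j$, the hypothesis $E\notin g_i\mathcal N(\mathcal S) g_j^{-1}$ means $F := g_i^{-1}Eg_j\notin\mathcal N(\mathcal S)$; using (ii) and (i) together with $SP=P$ for $S\in\mathcal S$,
\[
PFP \;=\; PFSP \;=\; z\,PSFP \;=\; z\,PFP,
\]
so $(1-z)PFP=0$ and hence $P_iEP_j = 0$. Assembling the blocks gives $QEQ = \sum_i X_{A_i}\otimes I_{B_i}$, which is exactly the OAQEC testable condition Eq.~(\ref{oaqeceqn}).

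For the \emph{only if} direction I would reverse the same computations. From $QEQ = \sum_i X_{A_i}\otimes I_{B_i}$, compressing to the $i$th diagonal block and conjugating back gives $Pg_i^{-1}Eg_iP = X_{A_1}\otimes I_{B_1}$, so by (iii) $g_i^{-1}Eg_i\notin\mathcal N(\mathcal S)\setminus\mathcal G$ and therefore $E\notin\mathcal N(\mathcal S)\setminus\mathcal G$. The vanishing of off-diagonal blocks gives $PFP = 0$ with $F = g_i^{-1}Eg_j$; if one had $F\in\mathcal N(\mathcal S)$, then by (i) $C$ would reduce $F$, whence $FP = PFP = 0$, contradicting unitarity of $F$. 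Thus $F\notin\mathcal N(\mathcal S)$, i.e.\ $E\notin g_i\mathcal N(\mathcal S) g_j^{-1}$, closing the argument.

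The main (mild) obstacle is ingredient (ii): the anti-commutation phase in the qudit Pauli group is a general nontrivial $d$th root of unity rather than just $-1$. The qubit proof was already phrased with this generality in mind, so the key identity $(1-z)PFP = 0$ still forces $PFP = 0$ thanks to $z\neq 1$, and no new analytic input is needed beyond replacing $iI$ by $\sqrt{\omega}\,I$ and $\mathcal P_n$ by $\mathcal P_{d,n}$ throughout.
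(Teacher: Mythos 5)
Your proposal is correct and matches the paper's intent exactly: the paper's own ``proof'' of this theorem is the observation that the OAQEC machinery is not qubit-dependent and that the proof of Theorem~\ref{errorcorrecthm} (which was deliberately phrased with a general unimodular $z\neq 1$ rather than $z=-1$) carries over verbatim once $iI$ is replaced by $\sqrt{\omega}\,I$ and $\mathcal P_n$ by $\mathcal P_{d,n}$. The three substitutions you isolate --- the explicit projection formula $P=\tfrac{1}{|\mathcal S|}\sum_{S\in\mathcal S}S$, the commutation-up-to-a-power-of-$\omega$ relation, and the qudit OQEC/QEC input --- are precisely the ``analogous properties of the generalized Pauli group'' the paper appeals to.
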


A pair of examples are discussed below. 

\begin{example} 
The motivating example presented above generalizes straightforwardly as follows. 
\begin{itemize}
\item[$\bullet$] Let $s \leq n$ be a fixed positive integer and let $\mathcal S = \langle Z_1, \ldots , Z_s \rangle \subseteq \mathcal P_{d,n}$.
Then 
\[
C = C(\mathcal S) = \mathrm{span} \big\{ \ket{ \underbrace{0 \cdots 0}_{s} \, i_1 \cdots i_{n-s}  } \, : \, 0 \leq i_j \leq d-1 \big\}, 
\]
and $\dim C = d^{n-s}$ so that $C$ can encode $n-s$ qudits. 

\item[$\bullet$] Let $r$ be a fixed integer with $0 \leq r \leq n-s$, and let $\mathcal G_0$ be the set of $r$ pairs of generating Pauli operators acting on qudits $s+1$ to $r+s$: 
\[
\mathcal G_0 = \big\{  X_i, Z_i \, : \, s+1 \leq i \leq r+s \big\}. 
\]
Then the gauge group $\mathcal G$ is generated by $\mathcal S$, $\sqrt{\omega} I$, and $\mathcal G_0$, and includes the full subgroup of operators in $\mathcal P_{d,n}$ acting non-trivially on the $r$ gauge qudits.  

\item[$\bullet$] Let $k = n - s - r$,  and let $\mathcal L_0$ be the set of $k$ pairs of generating Pauli operators acting on qudits $r+s+1$ to $n$:
 \[
\mathcal L_0 = \big\{  X_i, Z_i \, : \, r+s+1 \leq i \leq n \big\}. 
\]
Then the logical group $\mathcal L$ is the group generated by $\mathcal L_0$ and $\sqrt{\omega} I$, and includes the full subgroup of operators in $\mathcal P_{d,n}$ acting non-trivially on the $k$ logical qudits. 

\item[$\bullet$] The normalizer $\mathcal N(\mathcal S) = \mathcal Z(\mathcal S)$ for $\mathcal S$ inside $\mathcal P_{n,d}$ is given by: 
\[
\mathcal N(\mathcal S) = \Big\{   \omega^{c/2} \, \cdot \, Z_1^{b_1} \cdots Z_s^{b_s} \, \cdot \,  X_{s+1}^{a_{s+1}} Z_{s+1}^{b_{s+1}} \cdots    X_{n}^{a_{n}} Z_{n}^{b_{n}} \, : \, 0\leq c \leq 2d-1, \,\, 0 \leq a_j, b_j \leq d-1   \Big\}. 
\]
\end{itemize}

The size of the normalizer here is: $|\mathcal N(\mathcal S)| = 2d \times d^s \times (d^2)^{n-s} = 2 d^{2n -s + 1}$. Hence the number of normalizer cosets is given by, 
\[
|\mathcal P_{n,d}| / |\mathcal N(\mathcal S)| = d^s. 
\]
As in the qubit case, each operator $X_j$, $1\leq j \leq s$, does not belong to $\mathcal N(\mathcal S)$, and nor does any product $X_j X_{j'}^{-1}$  of operators from this set. So a coset transversal of maximal size is given by the $d^s$-element set: 
\[
\mathcal T = \big\{ X_1^{a_1} \cdots X_s^{a_s} \, : \, 0 \leq a_j \leq d-1 \big\}.  
\]

Thus, given a (non-trivial) subset of coset representatives $\mathcal T_0 \subseteq \mathcal T$, the subspaces $T C$, $T\in\mathcal T_0$, are mutually orthogonal and the corresponding subspace for the hybrid code is $C_{\mathcal T_0} = \oplus_{T\in\mathcal T_0} T C$, with $C$ (in the case of a non-trivial gauge group) having subsystem structure that is carried to the subspaces $TC$ by the transversal operators. 
A full characterization of the possible correctable errors for any given coset subset $\mathcal T_0$ is given by Theorem~\ref{errorcorrecthmd}. The simple example of a two element transversal set and set of correctable errors discussed above in the qubit case, carries through analogously.  
\end{example} 

As another example, we give a hybrid version of a (subspace) code presented in the seminal work \cite{gottesman2001encoding}. In contrast to the qubit ($d=2$) case, this example shows how for larger $d$, even a single mode ($n=1$) can generate interesting hybrid code structures.  

\begin{example}
Let $d=18$ and $n=1$. A single qubit subspace code, which can be viewed as a `pre-GKP code', is given in \cite{gottesman2001encoding} as the stabilizer subspace $C$ generated by the (abelian) group $\mathcal S = \langle X^6, Z^6 \rangle$, which one can readily calculate is spanned by the two states: 
\[
\ket{\overline{0}} = \frac{1}{\sqrt{3}} (\ket{0} + \ket{6} + \ket{12} ) \quad\quad \mathrm{and} \quad\quad  \ket{\overline{1}} = \frac{1}{\sqrt{3}} (\ket{3} + \ket{9} + \ket{15} ) . 
\]
(Note that for general $d$, the dimension of $\mathcal C$ need not be a power of $d$, as this example shows.) The commutation relations as it relates to these two operators are given, for any positive integers $a$, $b$, as follows: 
\begin{eqnarray*}
(X^a Z^b) X^6 &=& \omega^{6b} X^6 (X^a Z^b) \\ 
(X^a Z^b) Z^6 &=& \overline{\omega}^{6a} Z^6 (X^a Z^b) . 
\end{eqnarray*}
In particular, $X^a Z^b$ commutes with $\mathcal S$ if and only if $a$ and $b$ are both divisible by 3. Thus, in this case we have 
\[
\mathcal N(\mathcal S) = \mathcal Z(\mathcal S) = \big\{ \omega^{c/2} X^a Z^b \,\,\, \big| \,\, \, 0\leq c \leq 35, \, a,b\in \{0,3,6,9,12,15\} \big\}. 
\]
Hence, $|\mathcal N(\mathcal S)| = 36 \times 6 \times 6$, and from $|\mathcal P_{18,1} | = 36 \times 18 \times 18$  it follows that the number of cosets is given by:  $|\mathcal P_{18,1} | / |\mathcal N (\mathcal S) | = 9$. 

Returning to the original code construction, the logical operators were identified as $\overline{X} = X^3$ and $\overline{Z} = Z^3$. Moreover, as noted in \cite{gottesman2001encoding}, the 9 operators belonging to the set $\mathcal T = \{ X^a Z^b \, : \, |a|,|b|\leq 1 \}$ form a correctable error set for the code (in the classic Knill-Laflamme sense, which remember is captured as a special case of OAQEC), as one can show they map $C$ to 9 mutually orthogonal subspaces. This set of operators is also of interest here, as $\mathcal T$ forms a coset transversal for $\mathcal S$. Indeed, one can easily verify using the anti-commutation relations that any two elements from this set define different cosets for $\mathcal S$ inside $\mathcal P_{18,1}$ (as any product $T_1^{-1} T_2 \notin \mathcal N(\mathcal S)$ when $T_1,T_2\in \mathcal T$). 

We can thus consider hybrid versions of this code, by taking a subset $\mathcal T_0$ of elements from $\mathcal T$ and their corresponding code sectors, and then Theorem~\ref{errorcorrecthmd} can tell us what are the correctable error sets for the code. Consider for example the set $\mathcal T_0 = \{I, X, X^{-1} \} \subseteq \mathcal T$. Here the gauge group is generated by $\mathcal S$ and $\sqrt{\omega}I$, and for this particular $\mathcal T_0$ the first set in the union of Eq.~(\ref{stabformoaqeccondd}) is equal to: 
\[
\mathcal N(\mathcal S) \setminus \mathcal G =  \big\{ \omega^{c/2} X^a Z^b \,\,\, \big| \,\, \, 0\leq c \leq 35, \, a,b\in \{3,9,15\} \big\}, 
\]
which follows from elements commuting modulo a power of $\omega$. 
The cross-term union of Eq.~(\ref{stabformoaqeccondd}) is defined by 6 sets, but due to repeats it collapses to the union of 4 sets given by $\bigcup_{a\in \{ -2, -1, 1, 2\}} X^a \mathcal N(\mathcal S)$, which one can check is equal to the set:  
\[
\big\{ \omega^{c/2} X^a Z^b \,\,\, \big| \,\, \, 0\leq c \leq 35, \, a\in \{ 1,2,4,5,7,8,10,11,13,14,16,17  \}, \, b\in \{0,3,6,9,12,15\} \big\}. 
\] 
Observe the first set is a subset of the second union, which is a consequence of $\mathcal T_0$ including $I$ and being inverse closed. 
Thus, by Theorem~\ref{errorcorrecthmd}, the possible correctable errors for this code are precisely those operator sets $\mathcal E \subseteq \mathcal P_{18,1}$ such that $g_1^{-1} g_2$ does not belong to this union for any choice of $g_1,g_2\in\mathcal E$. For example, one can easily check that the set $\mathcal E = \{ Z^{2b+1} \, : \, 0\leq b \leq 8 \}$ satisfies this condition and hence forms a correctable set of errors for the code. 

Note that the hybrid code in this particular instance is 6-dimensional, as it is determined by a qubit base code and the 3 code sectors defined by $\mathcal T_0 = \{I, X, X^{-1} \}$; namely, the direct sum  $C_{\mathcal T_0} = C \oplus XC \oplus X^{-1}C$. Thus, as we are in a 18-dimensional space, we can have a maximum of 3 (non-degenerate) errors that can be correctable for this code. One might express concern then, as the error set $\mathcal E$ includes 9 operators; however, there is no contradiction here, as these operators include degeneracy on $C_{\mathcal T_0}$. Indeed, one can check directly that $Z, Z^3, Z^5$ map this 6-dimensional subspace to 3 mutually orthogonal subspaces. Moreover, $Z^6$ acts as the identity on $C$, and from the anti-commutation relations it acts as $\omega^6I$ on $XC$ and $\overline{\omega}^6I$ on $X^{-1}C$. It follows that $Z C_{\mathcal T_0}  = Z^7  C_{\mathcal T_0} = Z^{13}  C_{\mathcal T_0}$, where this is equality of subspaces, and analogous statements are true for the operator triples $\{Z^3, Z^9, Z^{15}\}$ and  $\{Z^5, Z^{11}, Z^{17}\}$  on the other two (orthogonal) 6-dimensional subspaces defined by $Z^3$ and $Z^5$. So the 9 operator error set actually degenerates in this case to 3 different errors when one restricts to the hybrid code space. 
\end{example}


\section{Concluding Remarks}

This work opens up a number of potential new lines of investigation and the possible extension of some others. 
Further consideration of the hybrid Bacon-Shor subsystem codes introduced here is warranted, given the wide applicability of the subsystem versions \cite{Poulin2005Stabilizer,Bacon2006Operator} in fault-tolerant quantum computing and beyond, and in particular with NISQ era quantum computers likely to involve hybrid forms of classical and quantum information processing \cite{preskill2018quantum}. 
It would be interesting to explore possible implications of our stablizer formalism on other classes of recently constructed hybrid codes; for instance, we expect new light can be shed on codes constructed in works such as \cite{grassl2017codes,li2020error,nemec2021infinite,nemec2022encoding}, and one can ask if the formalism allows for construction of more codes with useful properties following the approaches introduced there.  
One could also consider generalizations of this formalism to a variety of other settings of relevance in quantum information, such as other generalized Pauli error models, continuous QEC and infinite-dimensional settings such as in \cite{gottesman2001encoding}, or entanglement-assisted error correction \cite{brun2006correcting,hsieh2007general,kremsky2008classical}. 
Regarding the connection with black hole theory, it may be possible to use our OAQEC stabilizer formalism to construct toy models of AdS/CFT capturing the properties missed by the celebrated tensor-network models of~\cite{Pastawski2015,Hayden2016}, which are subsystem codes. We plan to pursue these investigations elsewhere and we invite others to do so as well.

\strut

{\noindent}{\it Acknowledgements.}
We dedicate this paper to the memory of our friend and mentor, David Poulin. We thank Priya Nadkarni and Rafael Alexander for stimulating discussions, and to Tarik El-khateeb for help with the hybrid code figure design.  We are grateful to the referees for several helpful comments and suggestions. D.W.K. was partly supported by NSERC Discovery Grant 400160. Research at Perimeter Institute is supported in part by the Government of Canada through the Department of Innovation, Science and Economic Development Canada and by the Province of Ontario through the Ministry of Colleges and Universities.

\bibliographystyle{plainurl}

\bibliography{refs}

\end{document}